\documentclass[conference]{ieeeconf}
\IEEEoverridecommandlockouts
\usepackage{cite}
\usepackage{amsmath,amssymb,amsfonts,amsthm}
\usepackage{graphicx}
\usepackage{textcomp}
\usepackage{xcolor}
\usepackage{colortbl}
\usepackage{multirow}
\usepackage{hhline}
\usepackage{algorithm}

\usepackage[noend]{algpseudocode}
\usepackage{comment}
\usepackage{bm}
\usepackage[T1]{fontenc}
\let\labelindent\relax
\usepackage{enumitem}

\newtheorem{definition}{Definition}

\newtheorem{theorem}{Theorem}

\newtheorem{proposition}[theorem]{Proposition}
\newtheorem{lemma}[theorem]{Lemma}
\theoremstyle{definition}
\newtheorem{example}{Example}

\newcommand{\rev}[1]{\textcolor{black}{#1}}

\title{\LARGE \bf On the Fragility of Efficiency: Uncoupled Learning with a Deviant }

\author{Vade Shah and Jason R. Marden
\thanks{This work is supported by the AFOSR (grant \#FA9550-25-1-0245), NASA (grant \#103215), and the NSF GRFP (grant \#2139319).}
\thanks{V. Shah and J. R. Marden are with the Department of Electrical and Computer Engineering at the University of California, Santa Barbara, CA.}%
}

\begin{document}

\maketitle 

\begin{abstract}
An important question in learning in games is whether players can learn to achieve efficient outcomes, those that maximize the sum of their utilities, without communication or coordination. Completely uncoupled learning algorithms, which use only each player's own past actions and utilities, can guarantee this in broad classes of games, but they typically require every player to follow the same algorithm. In this work, we ask whether one can design welfare-maximizing learning algorithms that are robust to a deviant player. To study this question, we take a well-studied welfare-maximizing learning algorithm as a testbed and consider what happens when every player except one follows it. First, we consider a deviant player who either (i) always plays the same action or (ii) selects their action uniformly at random. We show that such a player can significantly alter the long-run behavior of this algorithm, leading to states that are far from welfare-maximizing, and that no completely uncoupled learning algorithm is robust to such a player in every game. Next, we consider a strategic deviant player who acts in their own interest and show that this player can adopt a different learning algorithm to shift the long-run outcome in their favor. Moreover, we show that under any completely uncoupled welfare-maximizing learning algorithm, there is a game in which some player benefits from switching to a different completely uncoupled learning algorithm. These results show that completely uncoupled welfare-maximizing learning is fragile, even to a single deviant player, suggesting that robustness may require communication or coordination.
\end{abstract}
 
\section{Introduction}

Many engineering systems are made up of interconnected subsystems, such as turbines in a wind farm \cite{marden2013model}, vehicles in a traffic network \cite{roughgarden2002bad}, and sensors in an array \cite{zhu2013distributed}. In these systems, the collective choices of the subsystems determine their individual performance. In a wind farm, for example, each turbine slows the wind that reaches the turbines behind it, so one turbine's power depends on how much energy every other one chooses to extract \cite{porteagel2020wind}.

In systems like these, an engineer often seeks to design the behavior of the individual subsystems so as to optimize some system-wide metric. In the wind farm example, this might entail choosing the power extraction at each wind turbine so as to maximize total power extraction across the entire farm. However, engineers lack accurate models of the functional relationship between the choices of each individual turbine and the total power \cite{boersma2017tutorial}, so directly solving an optimization problem is not an option. Thus, one is faced with the task of optimizing an \emph{unknown} system-wide metric through \emph{subsystem-level} decisions.

Game-theoretic learning offers a natural framework for these kinds of problems \cite{fudenberg1998theory, marden2015game, young2009learning}. Here, the subsystems can be viewed as players in a game who repeatedly select actions and receive utilities that depend on the resulting action profile. An engineer then seeks to give each player a \emph{learning algorithm}, a rule that tells them what to play based on what they have observed so far, that leads the entire system to a desirable configuration. When each player observes only local information (i.e., their own actions and utilities), which is a realistic informational constraint in settings like wind farm control, designing such \emph{completely uncoupled} learning algorithms becomes challenging \cite{hart2003uncoupled, foster2006regret, babichenko2012completely}. Nevertheless, researchers have successfully designed different completely uncoupled learning algorithms that lead players to desirable configurations such as Nash equilibria \cite{marden2009payoff, young2009learning}, action profiles that are efficient (i.e., maximize \emph{welfare}, the sum of the players' utilities) \cite{marden2014achieving, menon2014collaborative}, and welfare-maximizing Nash equilibria \cite{pradelski2012learning}.

Embedded in nearly all existing results is a requirement that every player follow the exact same learning algorithm. In large or distributed systems like wind farms, this requirement can become problematic: individual subsystems may be faulty, act in their own interests, or even behave adversarially. Hence, a growing line of work has begun to study what can be achieved with completely uncoupled learning algorithms when some players do not learn \cite{jaleel2019robustness, akbar2023robustness} or learn differently from others \cite{akbar2023robustness, rajab2026log, shah2024learning, shah2025trial}. However, most of these results focus on specific learning algorithms or application domains, so a general understanding of how nonconformist learners, or \emph{deviants}, influence welfare-maximizing learning algorithms remains underdeveloped.

This work studies completely uncoupled learning algorithms designed to maximize welfare and asks how the presence of a deviant affects their guarantees. As a starting point, we consider the well-studied welfare-maximizing (WM) learning algorithm of \cite{marden2014achieving}, which provides a natural testbed for these questions. The WM learning algorithm is known to lead players to welfare-maximizing action profiles in a broad class of games known as interdependent games. We examine what happens when every player except one follows the WM learning algorithm, while the remaining deviant does something else.

We first consider a deviant who is either stubborn, meaning they always play the same action, or confused, meaning they select their action uniformly at random at every timestep; both behaviors are adapted from \cite{akbar2023robustness} and can be thought of as modeling a faulty subsystem. We show that with a confused deviant, the remaining players never settle on an action profile, and with a stubborn deviant, the action profile the players reach may even be far from maximizing the welfare of the remaining non-deviant players. We then show that this fragility is not specific to this algorithm: no completely uncoupled learning algorithm is robust to a single stubborn or confused deviant in every interdependent game.

Next, similar to \cite{akbar2023robustness, deng2019strategizing}, we consider a deviant who acts in their own interest. We show that by changing a single parameter of the WM learning algorithm, the deviant can shift the long-run outcome toward action profiles that benefit them, and that by switching to a different learning algorithm, they can guarantee an action profile that maximizes their own utility. Finally, we show that this fragility is again not specific to the WM learning algorithm: for any completely uncoupled learning algorithm that leads the players to welfare-maximizing action profiles, there is an interdependent game in which some player can increase their utility by switching to a different learning algorithm.

These results add to the growing literature on learning in games with heterogeneous players, and we envision that they will help guide the design of learning algorithms that remain robust to heterogeneous learners.

\section{Model}

In this section, we introduce our model of game-theoretic learning along with the welfare-maximizing learning algorithm presented in \cite{marden2014achieving}.

\subsection{Games}

Consider an $N$-player strategic form game $G$. The players are indexed by $i \in \{1, \dots, N\} \triangleq \mathcal{N}$. Each player $i$ selects an action $a_i$ from the finite set $\mathcal{A}_i$. The \emph{action profile} $a = (a_1, \dots, a_N)$ belongs to the set $\mathcal{A} \triangleq \prod_{i \in \mathcal{N}} \mathcal{A}_i$. Each player $i$ has a utility function $U_i : \mathcal{A} \to \mathbb{R}$; it is assumed (without loss of generality) that the utilities are normalized so that $\mathcal{U}_i \subset [0, 1)$ for all $i \in \mathcal{N}$, where $\mathcal{U}_i$ denotes the image of $U_i$. The \emph{welfare} of an action profile $a$ is defined as
\begin{equation}
    W(a) = \sum_{i \in \mathcal{N}} U_i(a);
\end{equation}
an action profile that maximizes $W$ is said to be \emph{welfare-maximizing}. We write $a = (a_{\mathcal{I}}, a_{-\mathcal{I}})$ to separate the actions of players in a proper subset $\mathcal{I} \subset \mathcal{N}$ from those not in $\mathcal{I}$. Accordingly, for a player $j \in \mathcal{N}$ and a fixed action $a_j \in \mathcal{A}_j$, we define the \emph{residual welfare} of the remaining players as
\begin{equation}
    W_{-j}(a_j, a_{-j}) = \sum_{i \in \mathcal{N} \setminus \{j\}} U_i(a_j, a_{-j}).
\end{equation}
An action profile $(a_j, a_{-j})$ is said to be \emph{residual-welfare-maximizing} if $a_{-j}$ maximizes $W_{-j}(a_j, \cdot)$.

\subsection{Learning Algorithms}

We consider the setting where a game $G$ is played repeatedly over timesteps $t \in \{1, 2, \dots\}$. In each timestep $t$, each player $i$ simultaneously chooses their action $a_i(t)$; then, each player $i$ receives their utility $U_i(a(t))$. Player $i$ chooses their action according to a probability distribution over their actions, $p_i(t) \in \Delta(\mathcal{A}_i)$, which is referred to as their \emph{strategy}; we write $p_i^{a_i}(t)$ for the probability that player $i$ selects action $a_i \in \mathcal{A}_i$. When the dependence is clear, we may omit $(t)$.

The focus of this paper is on \emph{learning algorithms}, mechanisms that generate strategies $p_i(t)$ for players to use to select their actions. Learning algorithms are characterized by the information they use to generate a strategy; in this work, they are restricted to a player's action and utility history. Accordingly, a learning algorithm $F_i$ is of the form
\begin{equation}
    p_i(t) = F_i \left( \left\{ a_i(\tau), U_i(a(\tau)) \right\}_{\tau = 1}^{t - 1} \right).
\end{equation}
Learning algorithms of this form are said to be \emph{completely uncoupled}, a well-known, highly informationally restrictive class of algorithms. Despite this restrictiveness, completely uncoupled algorithms have been shown to guarantee convergence to a variety of desirable configurations, including welfare-maximizing action profiles \cite{marden2014achieving, menon2014collaborative}.

\subsection{Welfare-Maximizing Learning}

This paper concerns learning algorithms that maximize welfare. As a testbed for studying the questions we seek to address, we present the welfare-maximizing (WM) learning algorithm proposed in \cite{marden2014achieving}. It works by equipping each player with a state variable $z_i = (m_i, \bar{a}_i, \bar{u}_i)$, where
\begin{itemize}
    \item $m_i$ is their \emph{mood}, which can be either content ($C$) or discontent ($D$).
    \item $\bar{a}_i \in \mathcal{A}_i$ is their \emph{baseline action}.
    \item $\bar{u}_i \in \mathcal{U}_i$ is their \emph{baseline utility}. 
\end{itemize}
Fix constants $b \in (0, 1]$, $c \geq N$, and $\varepsilon \in (0, 1)$. Under the WM learning algorithm, the players select actions and update their states according to the following:

\vspace{0.2cm}
\noindent \textbf{Action selection:} At time $t$, each player $i \in \mathcal{N}$ simultaneously selects their action as follows:

\begin{itemize}
    \item If $m_i = C$: select action $a_i(t)$ according to
    \begin{equation}
        p_i^{a_i}(t) = \begin{cases}
            \frac{\varepsilon^c}{\left\lvert \mathcal{A}_i \right\rvert - 1} & \text{for } a_i \neq \bar{a}_i \\
            1 - \varepsilon^c & \text{for } a_i = \bar{a}_i. 
        \end{cases}
        \label{eq:wm-action-C}
    \end{equation}
    \item If $m_i = D$: select action $a_i(t) \in \mathcal{A}_i$ uniformly at random.
\end{itemize}

\vspace{0.2cm}
\noindent \textbf{State update:} Each player $i \in \mathcal{N}$ receives their utility $u_i = U_i(a(t))$ and uses it to update their state $z_i$ as follows:
\begin{itemize}
    \item If $m_i = C$:
    \begin{itemize}
        \item If $(a_i, u_i) = (\bar{a}_i, \bar{u}_i)$, then
        \begin{equation}
            z_i \to (C, a_i, u_i).
            \label{eq:wm-update-C-baseline}
        \end{equation}
        \item Otherwise,
        \begin{equation}
            z_i \to \begin{cases}
                (C, a_i, u_i) & \text{with probability (w.p.) } \varepsilon^{b (1 - u_i)} \\
                (D, a_i, u_i)  & \text{w.p. } 1 - \varepsilon^{b (1 - u_i)}.
            \end{cases}
            \label{eq:wm-update-C-explore}
        \end{equation}
    \end{itemize}
    \item If $m_i = D$:
    \begin{equation}
        z_i \to \begin{cases}
            (C, a_i, u_i) & \text{ w.p. } \varepsilon^{b (1 - u_i)} \\
            (D, a_i, u_i)  & \text{ w.p. } 1 - \varepsilon^{b (1 - u_i)}.
        \end{cases}
        \label{eq:wm-update-D}
    \end{equation}
\end{itemize}
The action selection and state update procedures described above define the WM learning algorithm, which we write as $\texttt{WM}(b, c)$. The algorithm in \cite{marden2014achieving} corresponds to $b = 1$; we include the parameter $b$ to study players with heterogeneous parameters in Section~\ref{sec:different-params}. When all players follow $\texttt{WM}(b, c)$, this induces a Markov process over the state space
\begin{equation}\label{eq:original-state-space}
    \mathcal{Z} = \prod_{i \in \mathcal{N}} \; (\{C, D\} \times \mathcal{A}_i \times \mathcal{U}_i).
\end{equation}
A state $z$ is said to be \emph{stochastically stable} \cite{young1993evolution} if $\lim_{\varepsilon \to 0} \mu^{\varepsilon}(z) > 0$, where $\mu^{\varepsilon}$ denotes the stationary distribution of this process; this is the primary notion of convergence considered in this work. For brevity, we refer to the baseline action profile $\bar{a}$ of a stochastically stable state $z = (m, \bar{a}, \bar{u})$ as a \emph{stochastically stable action profile}.

The main result in \cite{marden2014achieving} applies to a broad class of games known as interdependent games:
\begin{definition}
    A game $G$ is said to be \emph{interdependent} if for any action profile $a \in \mathcal{A}$ and any nonempty proper subset of players $\emptyset \subset \mathcal{I} \subset \mathcal{N}$, there exists a player $k \notin \mathcal{I}$ and a choice of actions $a_{\mathcal{I}}' \in \prod_{i \in \mathcal{I}} \mathcal{A}_i$ such that $U_k(a_{\mathcal{I}}, a_{-\mathcal{I}})\neq U_k(a_{\mathcal{I}}', a_{-\mathcal{I}}).$
\end{definition}
\noindent Roughly speaking, this condition states that any subset of players can affect the utility of at least one player outside of that subset. Equipped with this definition, the main result regarding the WM algorithm is restated in Theorem \ref{thm:welfare-maximizing}:

\begin{theorem}\label{thm:welfare-maximizing}
    Let $G$ be a finite $N$-player interdependent game, and suppose that each player $i \in \mathcal{N}$ follows $\texttt{WM}(b, c)$ with $b \in (0, 1]$ and $c \geq N$. Then, every stochastically stable action profile is welfare-maximizing.
\end{theorem}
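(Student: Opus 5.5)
We would prove Theorem~\ref{thm:welfare-maximizing} with the theory of regular perturbed Markov processes \cite{young1993evolution}. The process induced by $\texttt{WM}(b,c)$ on $\mathcal{Z}$ is regular in $\varepsilon$: every transition probability is either bounded away from zero or of order $\varepsilon^{r}$ for some resistance $r\ge 0$, namely $c$ per content experimenter and $b(1-u_i)$ per player who becomes or stays content after an exploration or while discontent. Hence the stochastically stable states are exactly the recurrent classes of the unperturbed process ($\varepsilon=0$) that have minimum stochastic potential, i.e.\ minimum total resistance of a rooted spanning tree over the recurrent classes. The plan is to (i) identify the recurrent classes, (ii) bound resistances between them, and (iii) show by a tree-surgery argument that any class whose baseline profile is not welfare-maximizing has strictly larger potential than some welfare-maximizing one.

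\textbf{Step 1 (recurrent classes).} At $\varepsilon=0$ content players never experiment, and discontent players never become content, since that probability is $\varepsilon^{b(1-u_i)}$ with $u_i<1$. The recurrent classes are therefore: the all-discontent set $D^0$, in which everyone keeps randomizing, which forms one class; and, for each $a\in\mathcal{A}$, the singleton all-content state $C^a$ with $\bar a=a$ and $\bar u_i=U_i(a)$. A partially discontent state is transient, because discontent players randomize and so eventually change some content player's utility. Interdependence is exactly what guarantees this: the discontent set $\mathcal{I}$ can alter the utility of some $k\notin\mathcal{I}$, who then becomes discontent with probability $1$, and the argument repeats until everyone is discontent.

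\textbf{Step 2 (resistances).}
\begin{itemize}
\item $D^0\to C^a$ has resistance $\sum_i b(1-U_i(a)) = b(N-W(a))$: all players land on $a$ by chance at cost zero and each accepts.
\item $C^a\to D^0$ has resistance $c$: one experiment suffices, and by interdependence the resulting discontent cascades at zero cost.
\item Any exit from $C^a$ costs at least $c$, since some player must experiment.
\item A direct $C^a\to C^{a'}$ move requires at least one experimenter and at least one acceptance, so its resistance is at least $c$.
\end{itemize}

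\textbf{Step 3 (potential comparison).} Build a tree rooted at $C^{a}$ by sending every $C^{a'}$ to $D^0$ at cost $c$ and $D^0$ to $C^a$ at cost $b(N-W(a))$. Its potential is $c\,(|\mathcal{A}|-1)+b(N-W(a))$, minimized exactly at welfare maximizers. Trees rooted at $D^0$ cost at least $c|\mathcal{A}|$, which is strictly larger, because $b(N-W(a))<bN\le N\le c$. For a general tree rooted at $C^{a'}$, every non-root $C$-state needs an outgoing edge of resistance at least $c$, and $D^0$ needs an outgoing edge to some $C^{a''}$ of resistance at least $b(N-W(a''))$. The main obstacle is handling these paths carefully, since $D^0$ might route through intermediate states with partial acceptance.

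I would use the standard bound that any path leaving $D^0$ and ending at a $C$ state costs at least $\min_{a''} b(N-W(a''))$. The delicate point is to show that the path from $D^0$ must ultimately account for the acceptance cost of the root profile itself. Concretely, I would show that any tree rooted at $C^{a'}$ can be rerooted at $C^{a^*}$, with $a^*$ a welfare maximizer, by redirecting $C^{a^*}$'s outgoing edge to $D^0$ (cost exactly $c$, no greater than its old cost) and $D^0$'s edge to $C^{a^*}$. This strictly decreases the potential when $W(a')<W(a^*)$, once one checks that the cheapest way out of $D^0$ into $C^{a'}$ costs at least $b(N-W(a'))$, using $c\ge N$ to dominate any route that passes through other content states. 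It follows that stochastically stable states are $C^{a^*}$ with $a^*$ welfare-maximizing.
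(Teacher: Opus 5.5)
The overall structure is right: recurrence classes $D^0$ and $C^a$, the resistances $c$ and $b(N-W(a))$, the star tree giving $\gamma(C^{a})\le c(|\mathcal{A}|-1)+b(N-W(a))$, and the bound $\gamma(D^0)\ge c|\mathcal{A}|$. This is the argument of \cite{marden2014achieving} that the paper relies on. Your exclusion of $D^0$ is fine once you note $W(a^*)>0$ (otherwise every profile is welfare-maximizing and there is nothing to prove).

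\textbf{The gap.} You need the matching lower bound $\gamma(C^{a'})\ge c(|\mathcal{A}|-1)+b(N-W(a'))$ for an arbitrary root; this is the content of \cite[Lemma 4.3]{marden2014achieving}. You flag this point but do not prove it, and neither mechanism you propose works.
\begin{itemize}
\item \emph{The surgery is not a rerooting.} Pointing $C^{a^*}$ at $D^0$ and $D^0$ at $C^{a^*}$ creates the cycle $C^{a^*}\to D^0\to C^{a^*}$ and leaves $C^{a'}$ without an out-edge.
\item \emph{The repaired surgery gives no strict decrease.} Deleting the out-edge of $C^{a^*}$, adding $C^{a'}\to D^0$, and pointing $D^0$ at $C^{a^*}$ changes the resistance by $[c-r(C^{a^*}\to X)]+[b(N-W(a^*))-r(D^0\to Y)]$. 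This can be zero even when the root is not welfare-maximizing, e.g.\ if $C^{a^*}\to D^0$ and $D^0\to C^{a_1}$ with $a_1\neq a^*$ another welfare maximizer.
\item \emph{Using $c\ge N$ to ``dominate'' routes through other content states double-counts.} The $c$ you would spend absorbing the acceptance cost is the experiment on the out-edge of an intermediate state $C^{a_1}$. You have already charged that edge as one of the $|\mathcal{A}|-1$ non-root edges.
\item \emph{Your stated check is too weak.} Knowing only that every route from $D^0$ into $C^{a'}$ costs at least $b(N-W(a'))$ does not suffice; you need that cost \emph{in addition to} the intermediate experiments.
\end{itemize}

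\textbf{The missing step} is \ref{item:R2} applied to the whole tree path rather than to a single edge. Let $D^0\to C^{a_1}\to\cdots\to C^{a_{m-1}}\to C^{a'}$ be the tree path from $D^0$ to the root, and concatenate minimum-resistance realizations of its edges.
\begin{itemize}
\item By \ref{item:R1}, each of the $m-1$ segments starting at a content state contains an experiment, of resistance $c$.
\item Every player starts discontent and ends in $(C,a'_i,U_i(a'))$. So the concatenation contains, for each $i$, a last entry into that state. That entry is an acceptance of $U_i(a')$ with resistance $b(1-U_i(a'))$.
\item Experiments and acceptances are distinct events, so the path costs at least $(m-1)c+b(N-W(a'))$.
\end{itemize}
Add at least $c$ for each of the remaining $|\mathcal{A}|-m$ non-root content states. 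This gives $\gamma(C^{a'})\ge c(|\mathcal{A}|-1)+b(N-W(a'))$, hence equality. Comparing with $\gamma(D^0)$ then finishes the proof. Note that $c\ge N$ plays no role in this bound; it is needed only to rule out $D^0$.
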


\noindent This result states that in the long run, a welfare-maximizing action profile is played with high probability as $\varepsilon \to 0$.

\section{Results}\label{sec:results}

The focus of this paper is on understanding how \emph{heterogeneity} across players affects their ability to learn to play welfare-maximizing action profiles. To this end, we consider the setting where one player $j \in \mathcal{N}$, a \emph{deviant}, uses a completely uncoupled learning algorithm $F_j$, while all other players $i \neq j$ use a common learning algorithm $F \neq F_j$. We start with the WM algorithm as a baseline choice for $F$, then generalize to completely uncoupled welfare-maximizing algorithms more broadly. The forthcoming results analyze the potential effects of a deviant's learning algorithm $F_j$.

\subsection{Confused or Stubborn Deviant}\label{sec:confused-stubborn}

Drawing from \cite{akbar2023robustness}, first consider two extremely simple behaviors for player $j$:

\begin{itemize}
    \item \textbf{Confused:} at each timestep $t$, player $j$ selects an action uniformly at random from $\mathcal{A}_j$.
    \item \textbf{Stubborn:} at each timestep $t$, player $j$ plays a fixed action $a_j^{\rm stb} \in \mathcal{A}_j$.
\end{itemize}

\noindent These behaviors can be thought of as deliberate strategies or as some notion of defectiveness in the player's behavior. If player $j$ is confused or stubborn while the remaining players employ the WM learning algorithm, then this induces a different Markov process over the state space
\begin{equation}
    \mathcal{Z}_{-j} = \prod_{i \in \mathcal{N} \setminus \{j\}} \; (\{C, D\} \times \mathcal{A}_i \times \mathcal{U}_i).
\end{equation}
When player $j$ is stubborn, the action profile associated with a state $z_{-j} = (m_{-j}, \bar{a}_{-j}, \bar{u}_{-j}) \in \mathcal{Z}_{-j}$ is $(a_j^{\rm stb}, \bar{a}_{-j})$, and stochastically stable action profiles are those associated with stochastically stable states. Our first result establishes characteristics of stochastically stable states in either setting.

\begin{proposition}\label{prop:confused-stubborn}
    Let $G$ be a finite $N$-player interdependent game, and suppose that each player $i \in \mathcal{N} \setminus \{ j \}$ follows $\texttt{WM}(b, c)$ with $b \in (0, 1]$ and $c \geq N$.
    \begin{enumerate}[label=(\alph*)]
        \item Suppose that player $j$ is confused. Then, in every stochastically stable state, every player $i \in \mathcal{N} \setminus \{ j \}$ is discontent.
        \item Suppose that player $j$ is stubborn with an action $a_j^{\rm stb} \in \mathcal{A}_j$ that appears in some welfare-maximizing action profile. Then, there may exist no stochastically stable action profile that is welfare-maximizing.
        \item Suppose that player $j$ is stubborn with an action $a_j^{\rm stb} \in \mathcal{A}_j$. Then, there may exist no stochastically stable action profile that is residual-welfare-maximizing.
    \end{enumerate}
\end{proposition}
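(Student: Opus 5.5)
Part (a) is about recurrence: it suffices to show that any state in which some WM player is content is transient in the limit, i.e., not in the support of $\lim_{\varepsilon\to 0}\mu^\varepsilon$. Part (b) and part (c) are existence-of-counterexample statements, so for each we will build a small interdependent game, compute its stochastically stable states with the resistance-tree machinery of \cite{young1993evolution} (as used in the proof of Theorem~\ref{thm:welfare-maximizing}), and check that no stable action profile has the required property.

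\textbf{Part (a).} Pick a game, or an argument valid for all interdependent games, in which player $j$'s action changes some other player's utility. Interdependence with $\mathcal{I}=\{j\}$ gives a player $k\neq j$ and actions $a_j,a_j'$ with $U_k(a_j,a_{-j})\neq U_k(a_j',a_{-j})$. Now consider a state in which some set of WM players is content, and suppose first that all of them play their baselines. The confused player randomizes uniformly with probability bounded away from zero, independent of $\varepsilon$.

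If player $k$ is content, then with $\Theta(1)$ probability $k$ observes a utility different from $\bar u_k$ and becomes discontent, with probability $1-\varepsilon^{b(1-u_k)}\to 1$. This step has zero resistance. If instead the content players are unaffected by $j$, we chain through interdependence using the discontent players, whose uniform randomization is also a zero-resistance event. Either way there is a zero-resistance path from any state with a content player to the all-discontent set. The return path costs resistance, because a discontent player turns content with probability $\varepsilon^{b(1-u_i)}$, which has positive resistance whenever $u_i<1$; this holds since $\mathcal{U}_i\subset[0,1)$.

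So every recurrent class of the unperturbed process lies inside the all-discontent states, and stochastically stable states are contained in unperturbed recurrent classes. The main care needed is to handle content players that $j$ does not directly affect. We cascade: $j$ makes $k$ discontent, then the set $\{j,k\}$, now effectively randomizing, affects another player, and so on. Interdependence applied to growing sets $\mathcal{I}$ makes this cascade go through.

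\textbf{Part (b).} We take a $3$-player example, because with two players removing $j$ leaves a single WM learner, which trivially maximizes its own payoff. The construction proceeds as follows.
\begin{itemize}
\item Let player $j$ have two actions, $x$ and $y$, and fix $a_j^{\rm stb}=x$.
\item Choose utilities so that the global welfare maximizer is $(y,\cdot)$, while $x$ also appears in a welfare-maximizing profile via ties.
\item Design the game so that, with $j$ fixed at $x$, the stable profile of the remaining WM players is not welfare-maximizing.
\end{itemize}
The mechanism for the last point is that WM minimizes total resistance $\sum_i (1-u_i)$ summed over content players, but with $j$ stubborn, $j$'s utility no longer enters the potential. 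The stable state therefore maximizes $W_{-j}(x,\cdot)$ rather than $W$. We thus want a profile where $j$ plays $x$ that attains global maximum welfare only because $j$'s own utility is large, while a different $a_{-j}$ has higher residual welfare but lower total welfare.

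\textbf{Part (c)} is the harder direction, since naively WM with $j$ stubborn would maximize residual welfare. We need the restricted game $G(a_j^{\rm stb},\cdot)$ among the others to fail interdependence even though $G$ is interdependent. For example, interdependence in $G$ may rely on $j$ to propagate influence, and once $j$ is frozen the remaining players split into decoupled groups. In that case a content group has no way to learn that another group's change raised that group's utility. Concretely, take players $1$ and $2$ who are each affected only by $j$ and by themselves, with $U_1$ depending on $a_2$ only through $j$. Then each player individually settles, and stable states include profiles that are suboptimal for the sum. I would compute the resistance tree explicitly and show that the residual-welfare-maximizing profile has strictly higher stochastic potential, or at least that a non-maximizing profile ties with it; verifying this computation is the main obstacle.
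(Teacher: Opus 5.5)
Your part (a) is sound and is essentially the paper's argument. Apply interdependence with $\mathcal{I}$ equal to all non-content players together with $j$; this lets some content player be made discontent at zero resistance, and the all-discontent set is closed under $P^0$.

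The gap is in (b) and (c). These are existence claims, so their whole content is an explicit interdependent game plus a computation of its stochastically stable states, and you give neither.

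For (b), your reason for ruling out two players is backwards. A lone WM learner maximizing its own payoff is exactly what breaks welfare maximization, because that payoff is $W_{-j}(a_j^{\rm stb},\cdot)$ and ignores $U_j$; this is your own mechanism in its simplest form. In the Prisoner's Dilemma of Example~\ref{ex:stubborn-welfare}, let player 1 be stubborn at $A$, an action of the unique welfare maximizer $(A,A)$. Player 2 then compares $0.5$ with $0.8$. A three-class tree computation over $z_A$, $z_B$, $\mathcal{D}$ gives potentials $c+0.5b$, $c+0.2b$, $2c$, so $(A,B)$ is the unique stable profile. Your three-player plan could be completed, but only by also making the game among $\mathcal{N}\setminus\{j\}$ with utilities $U_i(a_j^{\rm stb},\cdot)$ interdependent, so that Theorem~\ref{thm:welfare-maximizing} applies to it. The claim that the stable state maximizes $W_{-j}(x,\cdot)$ is not automatic, as (c) itself shows.

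For (c), you have the right high-level mechanism: freezing $j$ destroys interdependence. However, the concrete construction you propose cannot work. Suppose that, once $j$ is frozen, each remaining player's utility depends only on their own action. Then each WM player's action choice and state update depend only on their own state, so the process is a product of independent single-player WM chains. Each player settles on a maximizer of their own utility, and since $W_{-j}(a_j^{\rm stb},\cdot)$ is then separable, every stable profile \emph{is} residual-welfare-maximizing.

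What works instead is one-way coupling: some player's action affects another's utility but not conversely. The influencing player then greedily takes a small gain while destroying a larger one for the other. In Example~\ref{ex:stubborn-residual} with $a_3=B$, $U_1$ is $0$ at $A$ and $0.1$ at $B$ regardless of $a_2$, while $U_2(A,B)=0.9$. Player 1 settles at $B$, giving residual welfare $0.2$ instead of $0.9$.

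Verifying this is the real work, because the recurrence-class description from the interdependent case no longer holds. There are extra classes $E_{a_1}$ in which only player 1 is content, and one must check that the $z_{BB}$-tree of resistance $3c+1.9b$ is strictly optimal.

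Finally, your fallback of showing that a non-maximizing profile merely ties with the residual-welfare maximizer would not suffice. A tie at minimum potential makes both stable, so some stable profile would still be residual-welfare-maximizing.
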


Proposition \ref{prop:confused-stubborn}, proven in Appendix \ref{app:proof-confused-stubborn}, demonstrates that the presence of a confused or stubborn deviant can significantly alter the long-run behavior of the system. A confused deviant renders all-discontent states stochastically stable, implying that every action profile is visited equally often in the long run. A stubborn deviant can not only prevent every welfare-maximizing action profile from being stochastically stable, but can also do the same for every residual-welfare-maximizing action profile. The following examples illustrate the outcomes described in parts (b) and (c) of Proposition \ref{prop:confused-stubborn}.

\begin{example}[Stubborn Deviant: Welfare]\label{ex:stubborn-welfare}
    Consider a two-player Prisoner's Dilemma with payoff matrix
    \begin{center}
        \renewcommand{\arraystretch}{1.4}\setlength{\tabcolsep}{4pt}
        \begin{tabular}{c|>{\centering\arraybackslash}p{1.0cm}|>{\centering\arraybackslash}p{1.0cm}|}
            \multicolumn{1}{c}{} & \multicolumn{1}{c}{$A$} & \multicolumn{1}{c}{$B$} \\
            \hhline{~|-|-|}
            $A$ & \cellcolor{blue!20}\rev{$\frac{5}{10}, \frac{5}{10}$} & \cellcolor{red!20}\rev{$0, \frac{8}{10}$} \\
            \hhline{~|-|-|}
            $B$ & \rev{$\frac{8}{10}, 0$} & \rev{$\frac{2}{10}, \frac{2}{10}$} \\
            \hhline{~|-|-|}
        \end{tabular}
    \end{center}
    Here, player $1$ selects the row, player $2$ selects the column, and the first entry in each cell is player $1$'s utility. Note that the game is interdependent and that $(A, A)$ (shaded blue) is the unique welfare-maximizing action profile.

    Suppose that player $1$ is stubborn with $a_1^{\rm stb} = A$, which is player $1$'s action in $(A, A)$, and that player $2$ follows $\texttt{WM}(b, c)$ with $b \in (0, 1]$ and $c \geq 2$. Since player $1$ never deviates from $A$, the dynamics reduce to those of player $2$ alone running the WM learning algorithm over the first row of the game, i.e., choosing between $U_2(A, A) = 0.5$ and $U_2(A, B) = 0.8$. Consequently, the only stochastically stable action profile is $(A, B)$ (shaded red)\rev{; see Appendix~\ref{app:proof-confused-stubborn}}. Thus, even though player $1$ is stubborn with a welfare-maximizing action, no stochastically stable action profile is welfare-maximizing.
\end{example}

\begin{example}[Stubborn Deviant: Residual Welfare]\label{ex:stubborn-residual}
    Consider the three-player game with $\mathcal{A}_1 = \mathcal{A}_2 = \mathcal{A}_3 = \{A, B\}$:
    \begin{center}
        \renewcommand{\arraystretch}{1.4}\setlength{\tabcolsep}{4pt}
        \begin{tabular}{c|>{\centering\arraybackslash}p{1.3cm}|>{\centering\arraybackslash}p{1.3cm}|}
            \multicolumn{3}{c}{$a_3 = A$} \\
            \multicolumn{1}{c}{} & \multicolumn{1}{c}{$A$} & \multicolumn{1}{c}{$B$} \\
            \hhline{~|-|-|}
            $A$ & \rev{$\frac{1}{10}, 0, \frac{1}{10}$} & \rev{$0, \frac{1}{10}, 0$} \\
            \hhline{~|-|-|}
            $B$ & \rev{$0, \frac{1}{10}, 0$} & \rev{$\frac{1}{10}, 0, \frac{1}{10}$} \\
            \hhline{~|-|-|}
        \end{tabular}
        \hspace{0.4cm}
        \begin{tabular}{c|>{\centering\arraybackslash}p{1.3cm}|>{\centering\arraybackslash}p{1.3cm}|}
            \multicolumn{3}{c}{$a_3 = B$} \\
            \multicolumn{1}{c}{} & \multicolumn{1}{c}{$A$} & \multicolumn{1}{c}{$B$} \\
            \hhline{~|-|-|}
            $A$ & \rev{$0, \frac{1}{10}, 0$} & \cellcolor{blue!20}\rev{$0, \frac{9}{10}, \frac{1}{10}$} \\
            \hhline{~|-|-|}
            $B$ & \rev{$\frac{1}{10}, 0, \frac{1}{10}$} & \cellcolor{red!20}\rev{$\frac{1}{10}, \frac{1}{10}, 0$} \\
            \hhline{~|-|-|}
        \end{tabular}
    \end{center}
    Here, player $1$ selects the row, player $2$ selects the column, player $3$ selects the matrix, and the $i$-th entry in each cell is the utility of player $i$. Notice that this game is interdependent, so when every player follows $\texttt{WM}(b, c)$ with $b \in (0, 1]$ and $c \geq 3$, the only stochastically stable action profile is $(A, B, B)$ (shaded blue). Moreover, if player 3 is stubborn with $a_3^{\rm stb} = B$ while players 1 and 2 continue to follow $\texttt{WM}(b, c)$, then $(A, B, B)$ is also residual-welfare-maximizing. However, in this case, player 1's utility does not depend on player 2's action; thus, the WM learning algorithm will lead player 1 to choose the action that maximizes their own utility (in this case, $a_1 = B$). It can thus be shown \rev{(see Appendix~\ref{app:proof-confused-stubborn})} that the only stochastically stable action profile is $(B, B, B)$ (shaded red), which is not residual-welfare-maximizing.
\end{example}

These results demonstrate that the WM learning algorithm is not robust to a stubborn or confused deviant. This raises the question: can one construct a different completely uncoupled welfare-maximizing learning algorithm that is robust to confused or stubborn deviants? It turns out that these examples are sufficient to conclude that this is impossible:

\begin{theorem}\label{thm:no-robust}
    There exists no completely uncoupled learning algorithm $F$ such that, in every finite interdependent game, whenever every player except one player $j \in \mathcal{N}$ follows $F$, the players are led to\footnote{Akin to \cite{marden2014achieving}, we use the phrase ``lead to'' to mean either convergence, almost sure convergence, or convergence in the sense of stochastic stability; a player's \emph{long-run utility} is their utility at the action profile they are led to.}
    \begin{enumerate}[label=(\alph*)]
        \item only welfare-maximizing action profiles if player $j$ is confused,
        \item only welfare-maximizing action profiles if player $j$ is stubborn with an action that appears in some welfare-maximizing action profile, or
        \item only residual-welfare-maximizing action profiles if player $j$ is stubborn.
    \end{enumerate}
\end{theorem}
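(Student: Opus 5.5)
Each part will be proved by exhibiting a small interdependent game in which the requirement cannot be met by any completely uncoupled $F$. The tool throughout is \emph{indistinguishability}: a player following $F$ sees only its own actions and utilities, so its strategy depends only on its own history. If two games (or two deviant behaviors) give some player identical action--utility histories, then $F$ produces the same distribution of play for that player in both. So whatever profile it is led to in one situation, it is led to in the other.

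\textbf{Part (c).} Here the reduction is the cleanest, so I will start with it. Take Example~\ref{ex:stubborn-residual} with player 3 stubborn at $B$. With $a_3=B$ fixed, player 1's utility is $0.1$ if $a_1=B$ and $0$ if $a_1=A$, regardless of $a_2$. Player 1's observations are therefore independent of player 2 entirely.

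Now build a second interdependent game with the same action sets. With $a_3=B$, give player 1 the same utilities, but change the other payoffs so that the unique residual-welfare-maximizing profile under $a_3=B$ has $a_1=B$. The original game, by contrast, forces $a_1=A$.

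Since player 1's history has the same law in both games, $F$ leads player 1 to the same action in both. It must then fail in one of them. Concretely: if in the original game it picks $a_1=B$, we are done; otherwise it picks $A$ in the modified game, which is wrong there.

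The care needed is to keep the modified game interdependent. I will do this by adjusting the $a_3=A$ slice and the utilities of players 2 and 3 freely, since only player 1's utility under $a_3=B$ is pinned. Interdependence has to be checked for $\mathcal{I}=\{1\},\{2\},\{3\},\{1,2\}$, etc.

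\textbf{Part (b).} Use a two-player game with player 1 stubborn. Player 2's observed utilities are then just the row $U_2(a_1^{\rm stb},\cdot)$. Take two interdependent games $G, G'$ sharing the row $U_2(A,\cdot)$, with $A$ appearing in a welfare maximizer in both. In $G$, let $(A,A)$ be the unique welfare maximizer, as in the Prisoner's Dilemma of Example~\ref{ex:stubborn-welfare}. In $G'$, keep $U_2(A,\cdot)=(0.5,0.8)$ but set $U_1(A,B)$ large, say $0.9$, so that $(A,B)$ is the unique welfare maximizer.

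Player 2's history distribution is identical in the two games, so $F$ cannot lead to $(A,A)$ in $G$ and to $(A,B)$ in $G'$. Any choice of $U_1$ on row $B$ that keeps both games interdependent works; I will check that condition directly.

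\textbf{Part (a).} Use the same idea with a confused player 1. Now player 2 observes utilities $U_2(a_1,a_2)$ where $a_1$ is uniform and independent. Construct $G$ and $G'$ with identical $U_2$ but different $U_1$, chosen so that the welfare maximizers in $G$ and $G'$ have disjoint player-2 components. Player 2's history has the same law in both games, so its long-run play is the same. A deviation in player 1's utilities is invisible to player 2.

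There is a subtlety: the welfare-maximizing profile also depends on $a_1$, which is random. So "led to only welfare-maximizing profiles" is then impossible already in a single game whenever no $a_2$ makes $(a_1,a_2)$ welfare-maximizing for every $a_1$; the Prisoner's Dilemma suffices. I expect this to be the easiest route.

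The main obstacle overall is bookkeeping. The constructed games must be verified interdependent, and the indistinguishability argument must be stated for each of the three convergence notions in the footnote. I will phrase it at the level of the induced law of the deviant-independent player's trajectory, which covers all three notions at once.
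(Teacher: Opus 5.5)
Your proposal is correct. Parts (a) and (c) match the paper.
\begin{itemize}
\item Part (a) is the same single-game observation in the Prisoner's Dilemma. The paper makes player 2 confused instead of player 1, which is immaterial by symmetry.
\item Part (c) is the same indistinguishability argument on Example~\ref{ex:stubborn-residual} with $a_3^{\rm stb}=B$. For the modified game you leave unspecified, the paper simply lowers $U_2(A,B,B)$ from $0.9$ to $0$. This preserves interdependence and makes $(B,B)$ the unique maximizer of $W_{-3}(B,\cdot)$. Edits to the $a_3=A$ slice play no role in that step.
\end{itemize}

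The genuine difference is part (b). The paper derives (b) and (c) from a single 3-player pair $G, G'$. You instead use a 2-player pair: the Prisoner's Dilemma and its copy with $U_1(A,B)=0.9$, with player 1 stubborn at $A$. This works:
\begin{itemize}
\item $U_2$ is unchanged and $U_1(A,\cdot)$ still varies, so both games are interdependent.
\item $A$ lies in the unique welfare maximizer of each game: $(A,A)$ in the first and $(A,B)$ in the second, with welfare $1.7$.
\item Player 2 observes $U_2(A,\cdot)=(0.5,0.8)$ in both games.
\end{itemize}

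Your route is more elementary and isolates a different obstruction. The deviant's own payoff enters $W$ but is never seen by the learners, so no interdependence-breaking is needed.

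The paper's route costs a larger game but shows more. Its change from $G$ to $G'$ touches only a non-deviant's payoff. So the failure in (b) persists even with the deviant's payoffs held fixed. It is driven purely by the stubborn player decoupling player 1's utility from player 2's action, which is the mechanism the paper emphasizes. The paper's construction also settles (c) in the same stroke. Your (b) construction cannot do that, since residual welfare ignores the deviant's utility.
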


\begin{proof}
    Statement (a) follows trivially: in the Prisoner's Dilemma from Example \ref{ex:stubborn-welfare}, suppose that player 2 is confused. Regardless of player 1's learning algorithm, player 2 plays $A$ with probability $1/2$ at every timestep, so the players cannot be led only to the welfare-maximizing profile $(A, A)$.\footnote{\rev{Using an argument similar to that for statements (b) and (c), one can also construct an interdependent game in which no completely uncoupled learning algorithm leads the remaining players to an action profile that maximizes the expected residual welfare when player $j$ is confused.}}

    To prove statements (b) and (c), we proceed by contradiction. Suppose there exists a completely uncoupled learning algorithm $F$ satisfying (b) (resp. (c)), and consider the game $G$ from Example \ref{ex:stubborn-residual} with player 3 stubborn with $a_3^{\rm stb} = B$. Since $(A, B, B)$ is the unique welfare-maximizing (resp. residual-welfare-maximizing) action profile, $F$ must lead player 1 to $a_1 = A$. Now consider the game $G'$ obtained from $G$ by replacing the utilities at $(A, B, B)$ with $(0, 0, 0.1)$. One can verify that $G'$ is interdependent, that $(B, B, B)$ is welfare-maximizing in $G'$, and that every welfare-maximizing (resp. residual-welfare-maximizing) action profile in $G'$ with $a_3 = B$ has $a_1 = B$; thus, $F$ must lead player 1 to $a_1 = B$ in $G'$. However, once $a_3^{\rm stb} = B$ is fixed, player 1's utility in both $G$ and $G'$ is $0$ at $A$ and $0.1$ at $B$, regardless of player 2's action. Since $F$ is completely uncoupled, player 1's sequence of actions and utilities therefore has the same distribution in both games, so $F$ must lead player 1 to the same action in both games, which is a contradiction.
\end{proof}

\rev{The mechanism underlying these examples is that a stubborn deviant can break interdependence: in Example~\ref{ex:stubborn-residual}, once player 3 fixes $a_3 = B$, player 1's utility no longer depends on player 2's action, so player 1 simply learns to maximize their own utility.} Hence, for any completely uncoupled welfare-maximizing learning algorithm, a single deviant can significantly alter the long-run behavior of the system, even if their behavior is extremely simple.

\subsection{Strategic Deviant: Different Parameters}\label{sec:different-params}

While a confused or stubborn deviant can destabilize welfare-maximizing states, they cannot in general steer the system toward a desired configuration. This subsection asks whether a deviant can guide the other players toward action profiles that benefit the deviant. Perhaps the simplest such deviation is for player $j$ to follow the WM learning algorithm, but with different parameters:

\begin{proposition}\label{prop:heterogeneous-params}
    Let $G$ be a finite $N$-player interdependent game, and suppose that each player $i \in \mathcal{N} \setminus \{ j \}$ follows $\texttt{WM}(b, c)$ while player $j$ follows $\texttt{WM}(b_j, c_j)$, with $b, b_j \in (0, 1]$ and $c, c_j \geq N$. Then, every stochastically stable action profile maximizes
    $
        \sum_{i \in \mathcal{N} \setminus \{j\}} U_i + \frac{b_j}{b} U_j.
    $
\end{proposition}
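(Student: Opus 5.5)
The plan is to follow the resistance-tree argument behind Theorem~\ref{thm:welfare-maximizing} from \cite{marden2014achieving} and track how player $j$'s parameters alter the edge resistances. The strategy has three steps: a reduction to a regular perturbed process, identification of the recurrent classes and their transition resistances, and a minimum-resistance spanning-tree comparison. The main observation is that the only place where $b_j$ enters is player $j$'s acceptance probability $\varepsilon^{b_j(1-u_j)}$. Player $j$'s experimentation rate $\varepsilon^{c_j}$ matters only through the condition $c_j \ge N$, which keeps experimentation costlier than any acceptance.

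\textbf{Step 1.} I would first verify that the joint process on $\mathcal{Z}$ is a regular perturbed Markov process. All transition probabilities are of the form $\varepsilon^{r}$ times a positive constant, so Young's theorem \cite{young1993evolution} applies: the stochastically stable states are exactly the roots of minimum-resistance trees on the recurrent classes of the unperturbed ($\varepsilon=0$) process.

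\textbf{Step 2.} Exactly as in \cite{marden2014achieving}, the recurrent classes are the following:
\begin{itemize}
\item the all-discontent class $D^0$;
\item for each $a\in\mathcal{A}$, the singleton $C^a$, in which every player is content with baseline action $a_i$ and aligned baseline utility $\bar u_i=U_i(a)$.
\end{itemize}
I would then recompute the resistances, replacing $\sum_i(1-U_i(a))$ by the weighted sum $b\sum_{i\ne j}(1-U_i(a)) + b_j(1-U_j(a))$:
\begin{itemize}
\item \textbf{$D^0 \to C^a$.} Every player must accept simultaneously, so the resistance is $\sum_{i\neq j} b(1-U_i(a)) + b_j(1-U_j(a)) = b\,(N - \tilde W(a))$ up to the constant offset $b_j$. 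Here $\tilde W = \sum_{i\neq j}U_i + \tfrac{b_j}{b}U_j$, and the offset is precisely $(b-b_j)$ per accounting, a constant that does not depend on $a$.
\item \textbf{$C^a \to D^0$.} One experiment costs $c$ (or $c_j$ if player $j$ experiments). Interdependence then propagates discontent at zero resistance, so this resistance is $\min(c,c_j)\ge N$, or exactly $c$ if only non-deviants experiment.
\item \textbf{$C^a \to C^{a'}$.} These transitions cost at least one experiment, i.e.\ at least $\min(c,c_j)$.
\end{itemize}

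\textbf{Step 3.} The comparison mirrors the original argument. The stochastic potential of $C^a$ equals
\[
\sum_{a'\ne a}\min(c,c_j) \;+\; b\bigl(N-\tilde W(a)\bigr) + \text{const},
\]
which is obtained from the tree in which every other $C^{a'}$ exits to $D^0$ and $D^0$ enters $C^a$. The potential of $D^0$ is strictly larger, since the requirement $c,c_j\ge N$ makes each exit cost at least as much as any acceptance, with strictness coming from $U_i<1$. Minimizing over $a$ therefore selects exactly the maximizers of $\tilde W$.

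\textbf{Main obstacle.} The hardest part will be confirming the lower bounds on the alternative transitions: no path between content states, or into $C^a$ through partially content intermediate states, may undercut the cost $b(N-\tilde W(a))$ once the weights are heterogeneous. Concretely, one must check that the argument of \cite{marden2014achieving} showing that such paths cost at least $c$ still goes through when the exploration exponents are $c_j \ne c$. Because both exponents are at least $N > b\sum_i(1-u_i)$ for any $u$, I expect the original inequalities to carry over verbatim, with $\min(c,c_j)$ replacing $c$.
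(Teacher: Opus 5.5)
Your proposal is essentially the paper's argument. The paper likewise:
\begin{itemize}
\item keeps the recurrence classes $\mathcal{D}$ and $\{z\}$, $z\in\mathcal{C}^0$, from \cite{marden2014achieving};
\item uses $\underline{c}=\min\{c,c_j\}$ as the attained lower bound on every edge out of $z$;
\item replaces $\sum_i(1-\bar u_i)$ by $\sum_i b_i(1-\bar u_i)$ for the edge $\mathcal{D}\to z$;
\item cites Lemma~4.3 of \cite{marden2014achieving} for the potential $\underline{c}(\lvert\mathcal{C}^0\rvert-1)+\sum_i b_i(1-\bar u_i)$.
\end{itemize}
The lower bound you flag as the main obstacle is settled in the paper by that same citation, with $c$ replaced by $\underline{c}$.

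One step needs a corrected justification: showing that $\mathcal{D}$ is not stochastically stable. The strict inequality $\gamma(\mathcal{D})>\min_{z}\gamma(z)$ does not come from $U_i<1$. Your claim that $N> b\sum_{i\neq j}(1-u_i)+b_j(1-u_j)$ for every $u$ is also false: it fails at $u\equiv 0$ when $b=b_j=1$.

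What you actually need is that some profile $a$ gives some player positive utility. Then $\sum_i b_i(1-U_i(a))<\sum_i b_i\le N\le\underline{c}$. Interdependence guarantees such a profile, since utilities lie in $[0,1)$ and cannot all be equal. The paper handles this point in its footnote.

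Two minor slips in your $\mathcal{D}\to C^a$ computation:
\begin{itemize}
\item The constant offset is $b_j-b$, not $b_j$ or $b-b_j$.
\item The matching lower bound comes from each player having to accept $U_i(a)$ at some point along the path, which is property (R2). Simultaneous acceptance is only the path that attains the bound.
\end{itemize}
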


\noindent Proposition \ref{prop:heterogeneous-params}, proven in Appendix \ref{app:proof-heterogeneous-params}, establishes that $c_j$ does not affect which action profiles can be stochastically stable, while $b_j$ reweights player $j$'s utility in the welfare function. Thus, player $j$ can increase $b_j$ to bias the stochastically stable states toward action profiles that yield them a higher utility, as shown in the following example:

\begin{example}[Biasing the Weighted Welfare]\label{ex:heterogeneous-b}
    Consider again the Prisoner's Dilemma from Example \ref{ex:stubborn-welfare}:
    \begin{center}
        \renewcommand{\arraystretch}{1.4}\setlength{\tabcolsep}{4pt}
        \begin{tabular}{c|>{\centering\arraybackslash}p{1.0cm}|>{\centering\arraybackslash}p{1.0cm}|}
            \multicolumn{1}{c}{} & \multicolumn{1}{c}{$A$} & \multicolumn{1}{c}{$B$} \\
            \hhline{~|-|-|}
            $A$ & \cellcolor{blue!20}\rev{$\frac{5}{10}, \frac{5}{10}$} & \rev{$0, \frac{8}{10}$} \\
            \hhline{~|-|-|}
            $B$ & \cellcolor{red!20}\rev{$\frac{8}{10}, 0$} & \rev{$\frac{2}{10}, \frac{2}{10}$} \\
            \hhline{~|-|-|}
        \end{tabular}
    \end{center}
    Suppose that player $2$ follows $\texttt{WM}(1/2, 2)$. If player $1$ also follows $\texttt{WM}(1/2, 2)$, then Theorem \ref{thm:welfare-maximizing} implies that the only stochastically stable action profile is $(A, A)$ (shaded blue). Now suppose that player $1$ instead follows $\texttt{WM}(1, 2)$, so that $b_1 / b = 2$. Then,
    $U_2(A, A) + 2 U_1(A, A) = 1.5$
    and
    $U_2(B, A) + 2 U_1(B, A) = 1.6$,
    so the only stochastically stable action profile becomes $(B, A)$ (shaded red), in which player $1$ receives their maximum utility.
\end{example}

\subsection{Strategic Deviant: Different Learning Algorithm}

Since $b_j \leq 1$, the extent to which player $j$ can bias the stochastically stable states in Proposition \ref{prop:heterogeneous-params} is limited. To overcome this, consider the following learning algorithm, inspired by \cite{pradelski2012learning} and referred to as the utility-maximizing (UM) learning algorithm. As in the WM learning algorithm, player $j$ maintains a state $z_j = (m_j, \bar{a}_j, \bar{u}_j)$. They use it to select their action and update their state as follows:

\vspace{0.2cm}
\noindent \textbf{Action selection:} At time $t$, player $j$ selects their action as follows:
\begin{itemize}
    \item If $m_j = C$: select action $a_j(t) = \bar{a}_j$.
    \item If $m_j = D$: select action $a_j(t) \in \mathcal{A}_j$ uniformly at random.
\end{itemize}

\vspace{0.2cm}
\noindent \textbf{State update:} After receiving their utility $u_j$, player $j$ updates their state as follows:
\begin{itemize}
    \item If $m_j = C$:
    \begin{itemize}
        \item If $u_j = \bar{u}_j$, then
        \begin{equation}
            z_j \to \begin{cases}
                (C, a_j, u_j) & \text{ w.p. } 1 - \varepsilon^{\Phi(u_j)} \\
                (D, a_j, u_j)  & \text{ w.p. } \varepsilon^{\Phi(u_j)}.
            \end{cases}
        \end{equation}
        \item Otherwise,
        \begin{equation}
            z_j \to \begin{cases}
                (C, a_j, u_j) & \text{ w.p. } \varepsilon^{\Psi(u_j)} \\
                (D, a_j, u_j)  & \text{ w.p. } 1 - \varepsilon^{\Psi(u_j)}.
            \end{cases}
        \end{equation}
    \end{itemize}
    \item If $m_j = D$, then
    \begin{equation}
        z_j \to \begin{cases}
            (C, a_j, u_j) & \text{ w.p. } \varepsilon^{\Psi(u_j)} \\
            (D, a_j, u_j)  & \text{ w.p. } 1 - \varepsilon^{\Psi(u_j)}.
        \end{cases}
    \end{equation}
\end{itemize}

Here, $\Phi: \mathbb{R} \to (0, N)$ and $\Psi: \mathbb{R} \to (0, 1)$ are assumed to be increasing and decreasing functions, respectively. Furthermore, they must satisfy
\begin{align}
    \Phi(u_j^*) - \Psi(u_j^*) &> N - 1 \; \text{ and } \label{eq:GH-cond-1} \\
    \Phi(u_j^* - \Delta_j) - \Psi(u_j^* - \Delta_j) &< 0, \label{eq:GH-cond-2}
\end{align}
where
\begin{equation*}
    u_j^* = \max_{a \in \mathcal{A}} \, U_j(a) \; \text{ and } \; 
    \Delta_j = \min_{\substack{a, a' \in \mathcal{A} \\ U_j(a) \neq U_j(a')}} \left\lvert U_j(a) - U_j(a') \right\rvert.
\end{equation*}
The action selection and state update procedures described above define the UM learning algorithm, which we write as $\texttt{UM}(\Phi, \Psi)$. If player $j$ employs the UM learning algorithm while the remaining players employ the WM learning algorithm, then this defines a new Markov process over $\mathcal{Z}$ \eqref{eq:original-state-space}. \rev{The interesting piece here is not the learning algorithm's structure itself, but rather the guarantee it provides when the remaining players follow the WM learning algorithm:}
\begin{proposition}\label{prop:strategic}
    Let $G$ be a finite $N$-player interdependent game, and suppose that each player $i \in \mathcal{N} \setminus \{ j \}$ follows $\texttt{WM}(b, c)$ with $b \in (0, 1]$ and $c \geq N$ while player $j$ follows $\texttt{UM}(\Phi, \Psi)$ with $\Phi$ and $\Psi$ satisfying \eqref{eq:GH-cond-1} and \eqref{eq:GH-cond-2}. Then, every stochastically stable action profile maximizes $U_j$.
\end{proposition}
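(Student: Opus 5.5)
The plan is to use the resistance-tree machinery of Young, as Theorem~\ref{thm:welfare-maximizing} does in \cite{marden2014achieving}. The process over $\mathcal{Z}$ is a regular perturbed Markov process: each transition probability is of order $\varepsilon^{r}$ for some resistance $r \ge 0$. The stochastically stable states are exactly the recurrent classes of the unperturbed process ($\varepsilon = 0$) that minimize stochastic potential.

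\textbf{Step 1: the unperturbed dynamics.} At $\varepsilon = 0$, each WM player who is content with $(a_i,u_i)$ equal to its baseline stays content. A discontent WM player randomizes and never becomes content unless $u_i = 1$, which is impossible since $u_i < 1$. Player $j$ under UM behaves the same way: content stays content when $u_j = \bar u_j$, and discontent never settles because $\Psi > 0$. The recurrent classes are therefore:
\begin{itemize}
\item the singletons $\{C^{\mathcal N}\}$-states $z^a$ with aligned baselines $\bar u_i = U_i(a)$ for every action profile $a$;
\item the single all-discontent class $D^0$, in which all players, including $j$, are discontent and randomize.
\end{itemize}

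\textbf{Step 2: resistances between these classes.} I would compute the following.
\begin{enumerate}
\item[(i)] Leaving $z^a$ toward $D^0$: the cheapest route is for player $j$ to become discontent spontaneously at cost $\Phi(U_j(a))$. Under the WM players alone this would cost $c \ge N$, and $\Phi < N$ makes $j$'s route cheaper. Because the game is interdependent, $j$'s random play then makes the others discontent at zero cost, one by one.
\item[(ii)] Entering $z^a$ from $D^0$: every player must turn content at the profile $a$. This costs $\sum_{i\neq j} b(1-U_i(a)) + \Psi(U_j(a))$.
\item[(iii)] Moving from $z^a$ to $z^{a'}$ directly: this requires an explorer, at cost $c$ per WM player or $\Phi$ for $j$, plus acceptance costs. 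I will use the standard argument of \cite{marden2014achieving} to show such routes are never cheaper than going through $D^0$. Any route with a WM explorer costs at least $c \ge N$, which is more than $\Phi + $ anything else. For routes where $j$ explores, the remaining players become discontent via interdependence, so the route reduces to routes (i) and (ii).
\end{enumerate}
From these, the stochastic potential of $z^a$, computed with a star tree centered at $D^0$, is
\[
\sum_{a'} \Phi(U_j(a')) - \Phi(U_j(a)) + \sum_{i\neq j} b(1-U_i(a)) + \Psi(U_j(a)).
\]
The state $D^0$ itself has potential $\sum_{a'} \Phi(U_j(a'))$. Minimizing over $a$ means minimizing
\[
f(a) = \sum_{i\ne j} b(1-U_i(a)) + \Psi(U_j(a)) - \Phi(U_j(a)),
\]
with the comparison $f(a) < 0$ needed for $z^a$ to beat $D^0$.

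\textbf{Step 3: conclusion.} The two conditions give the result.
\begin{itemize}
\item If $U_j(a) = u_j^*$, condition \eqref{eq:GH-cond-1} gives $\Psi - \Phi < -(N-1)$. Together with $\sum_{i\ne j} b(1-U_i) < N-1$, this yields $f(a) < 0$.
\item If $U_j(a) \le u_j^* - \Delta_j$, monotonicity of $\Phi$ and $\Psi$ together with \eqref{eq:GH-cond-2} gives $\Psi(U_j(a)) - \Phi(U_j(a)) > 0$, so $f(a) > 0$.
\end{itemize}
Hence only states $z^a$ with $U_j(a) = u_j^*$ minimize the stochastic potential, so both $D^0$ and all suboptimal $a$ are excluded.

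\textbf{Main obstacle.} The hardest part is Step 2(iii) together with (i): verifying that no tree using direct $z^a \to z^{a'}$ edges beats the star through $D^0$. In particular, a WM player's exploration at cost $c$ could trigger cheap transitions that bypass $\Phi$. I would handle this by lower-bounding every exit from $z^a$ by $\min\{c, \Phi(U_j(a))\} = \Phi(U_j(a))$. I would also lower-bound every entry into $z^a$ from any state by the acceptance cost in (ii). This is the same accounting used in \cite{marden2014achieving}, with $\Phi$ replacing $c$ for player $j$.
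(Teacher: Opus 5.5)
Your overall plan is the paper's. The pieces all match:
\begin{itemize}
\item the same recurrence classes;
\item the exit cost $\Phi(U_j(a))$ out of $z^a$, using $\Phi<N\le c$;
\item the entry cost $\Psi(U_j(a))+\sum_{i\neq j}b(1-U_i(a))$ from $D^0$;
\item star trees giving $\gamma(D^0)=K$ and $\gamma(z^a)=K-\Phi(U_j(a))+\Psi(U_j(a))+\sum_{i\neq j}b(1-U_i(a))$;
\item the same comparison in Step 3.
\end{itemize}
In Step 3 you should write $\sum_{i\neq j}b(1-U_i(a))\le N-1$; the strict inequality comes from \eqref{eq:GH-cond-1}.

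The gap is in how you justify the lower bound on $\gamma(z^a)$, which is exactly what excludes the suboptimal profiles. The edge-level claims in Step 2(iii) and in your last paragraph are false: a direct edge can be strictly cheaper than a detour through $D^0$. Suppose $a'$ and $a$ differ only in player $j$'s action, and some WM player $i$ has $U_i(a')=U_i(a)$ (interdependence allows this, e.g.\ when $|\mathcal{A}_j|\ge 3$). From $z^{a'}$, player $j$ can turn discontent at cost $\Phi(U_j(a'))$, play $a_j$, and accept at cost $\Psi(U_j(a))$. Meanwhile player $i$ stays content throughout and never pays $b(1-U_i(a))$. This breaks three of your claims:
\begin{itemize}
\item an entry into $z^a$ ``from any state'' need not pay the full acceptance cost in (ii);
\item such a route does not ``reduce to (i) and (ii)'', since the other players need not all become discontent;
\item ``$c\ge N$ is more than $\Phi$ plus anything else'' also fails, since $\Phi+\Psi+\sum_{i\neq j}b(1-U_i)$ can exceed $N$.
\end{itemize}

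What is true, and what the argument behind \cite[Lemma 4.3]{marden2014achieving} (which the paper invokes) uses, is a path-level count. Fix any $z^a$-tree $T$.
\begin{enumerate}
\item Each $z^{a'}$ with $a'\neq a$ has one outgoing edge. Its first transition contains an exit event costing at least $\Phi(U_j(a'))$: either a WM experiment, or $j$ turning discontent at its baseline utility.
\item Concatenate minimum-resistance realizations of the edges on the tree path from $D^0$ to $z^a$. This is a single transition sequence that starts with everyone discontent and ends at $z^a$. By (R2) and its UM analogue, it contains a last acceptance of $U_i(a)$ for every player, costing $b(1-U_i(a))$ for $i\neq j$ and $\Psi(U_j(a))$ for $j$.
\item These acceptances are moves into $C$, so they are distinct from the exit events (experiments or moves into $D$). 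Hence $R(T)\ge K-\Phi(U_j(a))+\Psi(U_j(a))+\sum_{i\neq j}b(1-U_i(a))$, which the star tree attains.
\end{enumerate}
The acceptance cost must be charged to the whole path from $D^0$, not to whichever edge happens to enter $z^a$. With this in place, your Step 3 goes through unchanged.
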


\noindent Proposition \ref{prop:strategic}, proven in Appendix \ref{app:proof-strategic}, establishes that the WM learning algorithm can be exploited by a single strategic deviant to guarantee that every stochastically stable action profile maximizes their utility. 

\begin{example}[Exploiting the WM Learning Algorithm]\label{ex:strategic}
    Consider again the Prisoner's Dilemma from Example \ref{ex:stubborn-welfare}:
    \begin{center}
        \renewcommand{\arraystretch}{1.4}\setlength{\tabcolsep}{4pt}
        \begin{tabular}{c|>{\centering\arraybackslash}p{1.0cm}|>{\centering\arraybackslash}p{1.0cm}|}
            \multicolumn{1}{c}{} & \multicolumn{1}{c}{$A$} & \multicolumn{1}{c}{$B$} \\
            \hhline{~|-|-|}
            $A$ & \cellcolor{blue!20}\rev{$\frac{5}{10}, \frac{5}{10}$} & \rev{$0, \frac{8}{10}$} \\
            \hhline{~|-|-|}
            $B$ & \cellcolor{red!20}\rev{$\frac{8}{10}, 0$} & \rev{$\frac{2}{10}, \frac{2}{10}$} \\
            \hhline{~|-|-|}
        \end{tabular}
    \end{center}
    Suppose now that player 2 follows $\texttt{WM}(1, 2)$ rather than $\texttt{WM}(1/2, 2)$. If player 1 also follows $\texttt{WM}(1, 2)$, then the only stochastically stable action profile is again $(A, A)$ (shaded blue). Moreover, player 1 cannot make $(B, A)$ stochastically stable by following $\texttt{WM}(b_1, c_1)$ for any $b_1 \in (0, 1]$ and $c_1 \geq 2$, nor can they make $(B, A)$ stochastically stable by being stubborn with $a_1^{\rm stb} = B$. However, suppose that player 1 follows $\texttt{UM}(\Phi, \Psi)$ with
    \begin{equation*}
        \Phi(u) = \frac{2}{1 + e^{7 - 10u}} \; \text{ and } \; \Psi(u) = \frac{1}{1 + e^{10u - 7}}.
    \end{equation*}
    The functions $\Phi$ and $\Psi$ are increasing and decreasing with images in $(0, 2)$ and $(0, 1)$, respectively. Moreover, $u_1^* = 0.8$ and $\Delta_1 = 0.2$, so
        $\Phi(0.8) - \Psi(0.8) \approx 1.19 > 1 = N - 1$
    and
        $\Phi(0.6) - \Psi(0.6) \approx -0.19 < 0$.
    By Proposition \ref{prop:strategic}, the only stochastically stable action profile is $(B, A)$ (shaded red).
\end{example}

Propositions \ref{prop:heterogeneous-params} and \ref{prop:strategic} demonstrate that the WM learning algorithm specifically can be exploited. A natural question that follows is whether some other learning algorithm guarantees convergence to welfare-maximizing action profiles while giving no player an incentive to deviate from it. The final result demonstrates that this is impossible:

\begin{theorem}\label{thm:incentive}
    There exists no completely uncoupled learning algorithm $F$ such that
    \begin{enumerate}[label=(\alph*)]
        \item if every player follows $F$, the players are led to only welfare-maximizing action profiles in every finite interdependent game, and
        \item in every finite interdependent game, no player $j \in \mathcal{N}$ can strictly increase their long-run utility by unilaterally deviating from $F$ to another completely uncoupled learning algorithm $F_j$.
    \end{enumerate}
\end{theorem}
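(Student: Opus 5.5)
The plan is a proof by contradiction built on a \emph{utility-relabeling} deviation. Suppose $F$ satisfies (a) and (b), and consider the Prisoner's Dilemma $G$ from Example~\ref{ex:stubborn-welfare}. It is interdependent, and $(A,A)$ is its unique welfare-maximizing profile. So by (a), when both players follow $F$, they are led only to $(A,A)$, and player~1's long-run utility is $U_1(A,A)=0.5$. It then suffices to exhibit a completely uncoupled $F_1$ under which player~1's long-run utility is $0.8$.

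\textbf{The deviation.} Fix a function $\varphi:\mathbb{R}\to[0,1)$ with
\[
\varphi(0.5)=0.1,\quad \varphi(0)=0,\quad \varphi(0.8)=0.9,\quad \varphi(0.2)=0.05 .
\]
Define
\[
F_1\big(\{a_1(\tau),u_1(\tau)\}_{\tau=1}^{t-1}\big)=F\big(\{a_1(\tau),\varphi(u_1(\tau))\}_{\tau=1}^{t-1}\big).
\]
In words, player~1 runs $F$ but feeds it the relabeled utilities $\varphi(u_1)$ instead of the true ones. This rule uses only player~1's own action and utility history, so it is completely uncoupled.

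\textbf{The auxiliary game.} Let $G'$ have the same action sets as $G$, with $U_2'=U_2$ and $U_1'=\varphi\circ U_1$. Its payoffs are $(0.1,0.5)$ at $(A,A)$, $(0,0.8)$ at $(A,B)$, $(0.9,0)$ at $(B,A)$, and $(0.05,0.2)$ at $(B,B)$. The welfares are $0.6$, $0.8$, $0.9$ and $0.25$, so $(B,A)$ is the unique welfare-maximizing profile of $G'$.

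Next I check that $G'$ is interdependent. With two players, it suffices that each player's action changes the other's utility at every profile.
\begin{itemize}
\item $U_2'$ changes with player~1's action: column $A$ gives $0.5$ vs.\ $0$, and column $B$ gives $0.8$ vs.\ $0.2$.
\item $U_1'$ changes with player~2's action: row $A$ gives $0.1$ vs.\ $0$, and row $B$ gives $0.9$ vs.\ $0.05$.
\end{itemize}
Hence, by (a), when both players follow $F$ in $G'$, they are led only to $(B,A)$.

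\textbf{Coupling the two processes.} Compare two processes: player~1 on $F_1$ and player~2 on $F$ in $G$, versus both players on $F$ in $G'$. At every step, each player's strategy is the same function of the realized action history. Player~2 receives identical utilities in both, since $U_2'=U_2$. Player~1's algorithm receives $\varphi(U_1(a))=U_1'(a)$ in both. By induction on $t$, the joint action sequences therefore have the same distribution.

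Consequently, in $G$ under the deviation the players are led to $(B,A)$, in whichever sense ``led to'' is taken, since all three notions are properties of the distribution of the action sequence. Player~1's long-run utility is then $U_1(B,A)=0.8>0.5$, contradicting (b).

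The only delicate points are the following. First, $\varphi$ must keep utilities in $[0,1)$, and it does. Second, $\varphi$ must be chosen so that $G'$ is interdependent and has a welfare maximizer that favors player~1; I expect this to be the main obstacle, and the explicit values above resolve it. Finally, the distributional-equivalence step is an easy induction.
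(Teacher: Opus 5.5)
Your proposal is correct and follows the paper's proof essentially exactly. You use the same Prisoner's Dilemma and let player~1 run $F$ on relabeled utilities. You then identify the result with all players running $F$ in the game $G'$ where $U_1'=\varphi\circ U_1$, and read off $(B,A)$ as the unique welfare maximizer of $G'$. The only differences are your specific relabeling values (the paper uses $\sigma(0.2)=0.01$, $\sigma(0.5)=0.02$) and that you check interdependence of $G'$ directly rather than deducing it from injectivity of the relabeling.
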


\begin{proof}
    Let $F$ satisfy (a), and fix a finite interdependent game $G$, a player $j \in \mathcal{N}$, and a strictly increasing function $\sigma: [0, 1) \to [0, 1)$. Let $F_j$ be the learning algorithm under which player $j$ follows $F$, but with each received utility $u_j$ replaced by $\sigma(u_j)$. Since $F$ is completely uncoupled, the resulting process is identical to the one induced when every player follows $F$ in the game $G'$ with utilities $U_j' = \sigma \circ U_j$ and $U_i' = U_i$ for all $i \in \mathcal{N} \setminus \{j\}$. Since $\sigma$ is injective, $G'$ is interdependent; thus, by (a), the process leads to an action profile that maximizes
    $
        \sum_{i \in \mathcal{N} \setminus \{j\}} U_i(a) + \sigma(U_j(a)).
    $
    In particular, take $G$ to be the Prisoner's Dilemma from Example \ref{ex:stubborn-welfare}, let $j = 1$, and let $\sigma$ be any strictly increasing function with $\sigma(0) = 0$, $\sigma(0.2) = 0.01$, $\sigma(0.5) = 0.02$, and $\sigma(0.8) = 0.9$. Then, $U_2 + \sigma \circ U_1$ takes the values $0.52$, $0.8$, $0.9$, and $0.21$ at $(A, A)$, $(A, B)$, $(B, A)$, and $(B, B)$, respectively, so $F_j$ leads to $(B, A)$ where player $1$ receives $U_1(B, A) = 0.8$. If player 1 follows $F$ instead, then by (a), the process leads to the unique welfare-maximizing action profile $(A, A)$ where $U_1(A, A) = 0.5$, violating (b).
\end{proof}

\noindent Taken together, the results in this section imply that no completely uncoupled welfare-maximizing learning algorithm is incentive compatible---there always exists some player who has an incentive to be a deviant.

\section{Conclusion}

As learning agents become increasingly prevalent, it is crucial to understand how differences among learners affect long-run outcomes. This work studies how a single deviant player can undermine welfare-maximizing learning, using the completely uncoupled algorithm of \cite{marden2014achieving} as a testbed. Drawing from \cite{akbar2023robustness}, we first show that a deviant who selects actions at random or always plays the same action can significantly change the long-run behavior (and welfare) of the remaining players. This weakness is not specific to this algorithm: no completely uncoupled learning algorithm is robust to such a player in every game. We then show that a deviant acting in their own interest can shift the long-run outcome in their favor. Again, this weakness is general: under any completely uncoupled welfare-maximizing learning algorithm, some player in some game benefits from switching to a different algorithm.

Together, these results show that the guarantees of completely uncoupled welfare-maximizing learning are fragile, even to a single deviant. Our analysis points to two vulnerabilities. First, existing algorithms rely heavily on interdependence to reach desirable action profiles, leaving them open to destabilization and exploitation. Second, complete uncoupledness may be too restrictive for robustness, since players cannot ``detect'' a deviant. Algorithms that do not rely on interdependence, and allow players to communicate or coordinate, may offer a path toward robustness.

\appendix

\subsection{Resistance Trees and Stochastic Stability}\label{app:stochastic-stability}

The proofs in this appendix rely on the theory of resistance trees for regular perturbed Markov processes \cite{young1993evolution}, which we briefly review following \cite{marden2014achieving}. Let $P^0$ be the transition matrix of a Markov chain on a finite state space $\mathcal{Z}$, and let $\{P^\varepsilon\}_{\varepsilon > 0}$ be a family of transition matrices on $\mathcal{Z}$. This family is a \emph{regular perturbed Markov process} if $P^\varepsilon$ is ergodic for all sufficiently small $\varepsilon > 0$, if $\lim_{\varepsilon \to 0} P^\varepsilon_{zz'} = P^0_{zz'}$ for all $z, z' \in \mathcal{Z}$, and if, whenever $P^\varepsilon_{zz'} > 0$ for some $\varepsilon > 0$, there exists a real number $r(z \to z') \geq 0$, referred to as the \emph{resistance} of the transition $z \to z'$, such that
$
    0 < \lim_{\varepsilon \to 0} \varepsilon^{-r(z \to z')} P^\varepsilon_{zz'} < \infty.
$
In particular, $r(z \to z') = 0$ if and only if $P^0_{zz'} > 0$. The resistance of a sequence of transitions $z_1 \to z_2 \to \cdots \to z_m$ is $\sum_{k=1}^{m-1} r(z_k \to z_{k+1})$.

Let $E_1, \dots, E_M$ denote the recurrence classes of $P^0$. For distinct recurrence classes $E_k$ and $E_l$, let $r(E_k \to E_l)$ denote the minimum resistance over all sequences of transitions that start in $E_k$ and end in $E_l$; we refer to $E_k \to E_l$ as an \emph{edge} with resistance $r(E_k \to E_l) > 0$. An \emph{$E_l$-tree} $T$ is a set of $M - 1$ edges such that from every $E_k \neq E_l$, there is a unique directed path in $T$ from $E_k$ to $E_l$; its resistance $R(T)$ is the sum of the resistances of its edges. The \emph{stochastic potential} $\gamma(E_l)$ is the minimum resistance over all $E_l$-trees. The following result is due to \cite{young1993evolution}:

\begin{lemma}\label{lem:young}
    Let $\{P^\varepsilon\}_{\varepsilon > 0}$ be a regular perturbed Markov process, and let $\mu^\varepsilon$ denote the stationary distribution of $P^\varepsilon$. Then $\mu^0 = \lim_{\varepsilon \to 0} \mu^\varepsilon$ exists and is a stationary distribution of $P^0$. Moreover, a state is stochastically stable if and only if it is contained in a recurrence class of $P^0$ with minimum stochastic potential.
\end{lemma}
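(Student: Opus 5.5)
The plan is to combine the Markov chain tree theorem (Freidlin--Wentzell / Kirchhoff formula) with the polynomial-type asymptotics given by the resistances, and then reduce from trees on states to trees on recurrence classes. First, for each sufficiently small $\varepsilon>0$, the chain $P^\varepsilon$ is ergodic on the finite set $\mathcal{Z}$. The tree theorem therefore gives
\begin{equation*}
\mu^\varepsilon(z) = \frac{q^\varepsilon(z)}{\sum_{z'\in\mathcal{Z}} q^\varepsilon(z')}, \qquad q^\varepsilon(z) = \sum_{T \in \mathcal{T}_z} \prod_{(x\to y)\in T} P^\varepsilon_{xy},
\end{equation*}
where $\mathcal{T}_z$ is the set of spanning trees on $\mathcal{Z}$ directed into $z$. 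By regularity, each edge with $P^\varepsilon_{xy}>0$ satisfies $P^\varepsilon_{xy} \sim \kappa_{xy}\varepsilon^{r(x\to y)}$ with $\kappa_{xy}\in(0,\infty)$. Each nonvanishing tree product is therefore asymptotic to a positive constant times $\varepsilon^{r(T)}$, with $r(T)$ the sum of its edge resistances. Since $\mathcal{T}_z$ is finite and all terms are positive, $q^\varepsilon(z) \sim \kappa_z \varepsilon^{\rho(z)}$, where $\rho(z)=\min_{T\in\mathcal{T}_z} r(T)$ and $\kappa_z>0$; no cancellations can occur. It follows at once that $\mu^0=\lim\mu^\varepsilon$ exists, and that $\mu^0(z)>0$ iff $\rho(z)=\min_{z'}\rho(z')$. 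Stationarity of $\mu^0$ for $P^0$ follows by passing to the limit in $\mu^\varepsilon P^\varepsilon=\mu^\varepsilon$, using $P^\varepsilon\to P^0$.

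Next I show that $\mu^0$ is supported on the recurrence classes of $P^0$. This follows from stationarity: any stationary distribution of $P^0$ puts zero mass on transient states. Consequently every minimizer of $\rho$ lies in some $E_l$.

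The main step, and the expected obstacle, is to prove $\rho(z)=\gamma(E_l)$ for every $z\in E_l$, which identifies state-level tree minimization with class-level tree minimization. This requires two inequalities.

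For ``$\le$'': take an optimal $E_l$-tree on classes. Realize each edge $E_k\to E_m$ by a minimum-resistance path from some state of $E_k$ to some state of $E_m$. Complete the structure to a spanning tree into $z$ by adding zero-resistance edges: inside each class these come from the irreducibility of $P^0$ on $E_k$, and from each transient state they follow zero-resistance paths into some recurrence class. When paths overlap, I would prune cycles, keeping one outgoing edge per state. This pruning can only lower the resistance, so the result is a $z$-tree of resistance at most $\gamma(E_l)$.

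For ``$\ge$'': take an optimal $z$-tree $T$. For each class $E_k\neq E_l$, follow the tree path from a fixed representative state of $E_k$ until it first enters another recurrence class. That path segment has resistance at least $r(E_k\to E_m)$ for the class $E_m$ it reaches. Segments starting from distinct classes use disjoint sets of edges up to the first hitting point, because each tree edge is the unique out-edge of its tail. Summing the segments therefore gives $r(T)\ge$ the resistance of a graph on classes in which every $E_k\neq E_l$ has an outgoing edge and whose paths lead to $E_l$. That graph contains an $E_l$-tree, which yields $r(T)\ge\gamma(E_l)$. The delicate bookkeeping here is ensuring the segments' edges are not double counted and that the induced class graph is acyclic toward $E_l$; I would handle this by cutting each segment at the first entry into any class other than its own.

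Combining the three parts: a state is stochastically stable iff it minimizes $\rho$, iff it lies in a recurrence class minimizing $\gamma$.
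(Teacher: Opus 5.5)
The paper does not prove this lemma; it cites it from \cite{young1993evolution}. Your outline (tree theorem, then support of $\mu^0$ on recurrent states, then $\rho(z)=\gamma(E_l)$ for $z\in E_l$) is the standard proof, and the first two steps are fine. So is ``$\le$'', provided the pruning is done correctly: for each state, choose an out-edge of the union graph along a shortest path to $z$. Keeping an arbitrary out-edge per state can create cycles.

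The gap is in ``$\ge$''. You claim that first-exit segments from distinct classes are edge-disjoint, but this is false once $P^0$ has transient states. Uniqueness of out-edges in $T$ means that tree paths which meet \emph{coalesce}. Two segments can therefore meet at a transient state before either has entered another class, and from that point on they share every edge.

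Here is a concrete case. Let $a,b,c$ be absorbing under $P^0$, and let $t$ move to $a$ or $b$ with probability $1/2$ each. Let $P^\varepsilon$ add the moves $a\to t$, $b\to t$, $c\to t$ and $t\to c$, each with probability $\varepsilon$. The only $c$-tree is $T=\{a\to t,\,b\to t,\,t\to c\}$, with $r(T)=3$. Your segments $a\to t\to c$ and $b\to t\to c$ share the edge $t\to c$. The class graph they produce, $\{\{a\}\to\{c\},\,\{b\}\to\{c\}\}$, has resistance $4>r(T)$, so your inequality fails. The bound $\rho(c)\ge\gamma(\{c\})=3$ is still true, but it is attained only by the class tree $\{\{a\}\to\{b\},\,\{b\}\to\{c\}\}$. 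Its first edge is realized by $a\to t\to b$, which uses the zero-resistance move $t\to b$, and that move is not in $T$. So no argument that only cuts paths of $T$ can work.

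The missing idea is to use zero-resistance moves out of transient states. One way is to eliminate transient states one at a time. When $t$ is removed, replace $r(x\to y)$ by $\min\{r(x\to y),\,r(x\to t)+r(t\to y)\}$. This preserves the recurrence classes and every $r(E_k\to E_m)$, and it does not increase the minimum $z$-tree resistance, for the following reason:
\begin{itemize}
\item Suppose the tree edge $t\to w$ of an optimal $z$-tree has positive resistance.
\item Then every zero-resistance move $t\to v$ lands in the subtree of $t$; otherwise swapping it in would lower the resistance.
\item You can then send every branch entering $t$ directly to $v$, except the branch containing $v$, which you send to $w$. The total cost is at most that of the edges at $t$.
\end{itemize}
Once no transient states remain, every edge of the segment out of $E_k$ has its tail in $E_k$, so disjointness is automatic.

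Separately, ``cutting at the first entry'' does not guarantee that your class graph leads to $E_l$. For a non-optimal tree, two classes can each first enter the other. What saves you is optimality. In a minimum-resistance $z$-tree, each $E_k\neq E_l$ has exactly one state whose tree edge leaves $E_k$. Otherwise you could redirect the other states of $E_k$, along zero-resistance edges inside $E_k$, to the last state of $E_k$ on a tree path, strictly lowering the resistance. With this unique-exit property, the first-exit class graph just records the order in which tree paths pass through the classes, and so it is an $E_l$-tree.
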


\noindent\textbf{Resistances under the WM learning algorithm.} Consider a player $i$ following $\texttt{WM}(b_i, c_i)$. In a single timestep, the events that player $i$ may undergo have the following resistances:
\begin{itemize}
    \item If $m_i = C$, selecting an action $a_i \neq \bar{a}_i$, referred to as an \emph{experiment}, has resistance $c_i$, and selecting $\bar{a}_i$ has resistance $0$. If $m_i = D$, selecting any action has resistance $0$.
    \item If $m_i = D$, or if $m_i = C$ and $(a_i, u_i) \neq (\bar{a}_i, \bar{u}_i)$, transitioning to $(C, a_i, u_i)$, referred to as \emph{accepting} $u_i$, has resistance $b_i (1 - u_i) > 0$, and transitioning to $(D, a_i, u_i)$ has resistance $0$.
    \item If $m_i = C$ and $(a_i, u_i) = (\bar{a}_i, \bar{u}_i)$, remaining at $z_i$ has resistance $0$.
\end{itemize}
Since the players select their actions and update their states independently given the current state, the resistance of a one-step transition is the sum of the resistances of the events undergone by the individual players, and the resistance of a sequence of transitions is the sum of the resistances of all events that occur along it; a confused or stubborn player contributes zero resistance. As in \cite{marden2014achieving}, one can verify that each process considered in this paper is a regular perturbed Markov process. Two consequences will be used repeatedly:

\begin{enumerate}[label=(R\arabic*)]
    \item\label{item:R1} Consider a state in which every player following the WM learning algorithm is content and receives their baseline utility when every such player selects their baseline action. Then every transition out of this state requires at least one experiment.
    \item\label{item:R2} If a sequence of transitions starts at a state in which player $i$'s state is not $(C, \bar{a}_i, \bar{u}_i)$ and ends at a state in which it is, then the sequence contains an event in which player $i$ accepts $\bar{u}_i$, which has resistance $b_i(1 - \bar{u}_i)$. Indeed, by the state update rules, the last transition of player $i$ into $(C, \bar{a}_i, \bar{u}_i)$ is such an event.
\end{enumerate}
Finally, for each process considered in this paper, we let $\mathcal{D}$ denote the set of \emph{all-discontent} states, i.e., the states in which every player whose state is tracked by the process is discontent: every player in $\mathcal{N}$ for processes over $\mathcal{Z}$, and every player in $\mathcal{N} \setminus \{j\}$ for processes over $\mathcal{Z}_{-j}$.

\subsection{Proof of Proposition \ref{prop:confused-stubborn}}\label{app:proof-confused-stubborn}

\noindent\textbf{Part (a).} Let $P^\varepsilon$ denote the transition matrix of the process over $\mathcal{Z}_{-j}$ induced when player $j$ is confused, and let $P^0 = \lim_{\varepsilon \to 0} P^\varepsilon$. Under $P^0$, every content player selects their baseline action, every content player whose action and utility do not match their baseline becomes discontent, and every discontent player remains discontent, each with probability $1$ (since $\varepsilon^{b(1 - u_i)} \to 0$ whenever $u_i < 1$). In particular, the set of content players can only shrink under $P^0$. Let $\mathcal{R} \subseteq \mathcal{Z}_{-j}$ denote the set of states in which every player $i \in \mathcal{N} \setminus \{j\}$ is discontent and there exists $a_j \in \mathcal{A}_j$ such that $\bar{u}_i = U_i(a_j, \bar{a}_{-j})$ for all $i \in \mathcal{N} \setminus \{j\}$. We claim that $\mathcal{R}$ is the unique recurrence class of $P^0$.

First, consider any state in which every player $i \in \mathcal{N} \setminus \{j\}$ is discontent. Every action profile $a \in \mathcal{A}$ is selected with probability $1 / \lvert \mathcal{A} \rvert > 0$, after which the state of each player $i \in \mathcal{N} \setminus \{j\}$ becomes $(D, a_i, U_i(a))$ with probability $1$. Thus, under $P^0$, the process moves from any such state to a state in $\mathcal{R}$ in one step with positive probability and remains in $\mathcal{R}$.

Next, consider a state $z$ in which the set $\mathcal{S} \subseteq \mathcal{N} \setminus \{j\}$ of content players is nonempty, and let $\mathcal{I} = \mathcal{N} \setminus \mathcal{S}$. Since $j \in \mathcal{I}$ and $\mathcal{S} \neq \emptyset$, $\mathcal{I}$ is a nonempty proper subset of $\mathcal{N}$. Fix any $a_{\mathcal{I}}$; by interdependence, there exist a player $k \in \mathcal{S}$ and a choice of actions $a_{\mathcal{I}}'$ such that $U_k(a_{\mathcal{I}}', \bar{a}_{\mathcal{S}}) \neq U_k(a_{\mathcal{I}}, \bar{a}_{\mathcal{S}})$, so at least one of these two utilities differs from $\bar{u}_k$. Under $P^0$, the players in $\mathcal{S}$ select $\bar{a}_{\mathcal{S}}$, while every player in $\mathcal{I}$, each of whom is either discontent or confused, selects their action uniformly at random. Thus, with positive probability, player $k$ selects $\bar{a}_k$ but receives a utility different from $\bar{u}_k$, and becomes discontent. Repeating this argument, the process reaches a state in which every player $i \in \mathcal{N} \setminus \{j\}$ is discontent, and hence reaches $\mathcal{R}$, with probability $1$.

It follows that $P^0$ has a unique stationary distribution $\mu^0$, whose support is $\mathcal{R}$. Since $P^\varepsilon \to P^0$, taking limits in $\mu^\varepsilon = \mu^\varepsilon P^\varepsilon$ shows that every limit point of $\mu^\varepsilon$ as $\varepsilon \to 0$ is a stationary distribution of $P^0$, and hence equals $\mu^0$. Therefore, $\mu^\varepsilon \to \mu^0$, and the stochastically stable states are precisely those in $\mathcal{R}$, in which every player $i \in \mathcal{N} \setminus \{j\}$ is discontent.

\vspace{0.2cm}
\noindent\textbf{Parts (b) and (c).} Parts (b) and (c) follow from Examples \ref{ex:stubborn-welfare} and \ref{ex:stubborn-residual}, respectively; for completeness, we verify the claims made in each example. In both cases, since player $j$ always selects $a_j^{\rm stb}$, the process over $\mathcal{Z}_{-j}$ coincides with the one induced when the players $\mathcal{N} \setminus \{j\}$ follow $\texttt{WM}(b, c)$ in the game with player set $\mathcal{N} \setminus \{j\}$ and utilities $U_i(a_j^{\rm stb}, \cdot)$.

\vspace{0.1cm}
\noindent\emph{Example \ref{ex:stubborn-welfare}.} Player $2$ receives $0.5$ from $A$ and $0.8$ from $B$. Since a content player whose baseline utility differs from $U_2(A, \bar{a}_2)$ becomes discontent at the next step, the recurrence classes of $P^0$ are $z_A = (C, A, 0.5)$, $z_B = (C, B, 0.8)$, and the set $\mathcal{D}$ of states in which player $2$ is discontent. By \ref{item:R1}, every edge out of $z_A$ or $z_B$ has resistance at least $c$, and $r(z_A \to \mathcal{D}) = r(z_B \to \mathcal{D}) = c$, as player $2$ can experiment and then become discontent. By \ref{item:R2}, $r(\mathcal{D} \to z_A) = 0.5 b$ and $r(\mathcal{D} \to z_B) = 0.2 b$, where equality is attained by player $2$ selecting the corresponding action and accepting. Thus, $\{z_A \to \mathcal{D}, \mathcal{D} \to z_B\}$ is a $z_B$-tree with resistance $c + 0.2 b$. Every $\mathcal{D}$-tree contains edges out of both $z_A$ and $z_B$, so its resistance is at least $2c > c + 0.2b$. Every $z_A$-tree contains an edge out of $z_B$ and a path from $\mathcal{D}$ to $z_A$; by \ref{item:R1} and \ref{item:R2}, these contain an experiment and an acceptance of $0.5$, respectively, so its resistance is at least $c + 0.5b > c + 0.2b$. By Lemma \ref{lem:young}, $z_B$ is the unique stochastically stable state, so the unique stochastically stable action profile is $(A, B)$, which is not welfare-maximizing.

\vspace{0.1cm}
\noindent\emph{Example \ref{ex:stubborn-residual}.} With $a_3^{\rm stb} = B$, players $1$ and $2$ play the game with utilities $(U_1, U_2)$ equal to $(0, 0.1)$ at $(A, A)$, $(0, 0.9)$ at $(A, B)$, $(0.1, 0)$ at $(B, A)$, and $(0.1, 0.1)$ at $(B, B)$. Since $U_1$ depends only on $a_1$ while $U_2(A, a_2) \neq U_2(B, a_2)$ for each $a_2$, the recurrence classes of $P^0$ are the four singletons $z_{a_1 a_2}$ in which both players are content at $(a_1, a_2)$ with aligned baselines, the classes $E_{a_1}$ in which only player $1$ is content, with baseline action $a_1$, and the class $\mathcal{D}$ in which both players are discontent. By \ref{item:R1} and \ref{item:R2}, the minimum resistance of an edge out of each class is as follows:
\begin{itemize}
    \item $c$ out of each $z_{a_1 a_2}$, attained by $z_{AA} \to E_A$, $z_{AB} \to \mathcal{D}$, and $z_{BA} \to E_B$; out of $z_{BB}$, it is attained only by $z_{BB} \to E_B$ and $z_{BB} \to \mathcal{D}$, the only classes reachable with one experiment and no acceptances.
    \item $0.1b$ out of $E_A$, attained by $E_A \to z_{AB}$, and $0.9b$ out of $E_B$, attained only by $E_B \to z_{BB}$, since leaving $E_{a_1}$ requires player $1$ to experiment or player $2$ to accept $U_2(a_1, \cdot)$.
    \item $0.9b$ out of $\mathcal{D}$, attained only by $\mathcal{D} \to E_B$, since every other class has player $1$ content, which requires player $1$ to accept $U_1(a_1) \in \{0, 0.1\}$.
\end{itemize}
Thus, $\{z_{AA} \to E_A \to z_{AB} \to \mathcal{D} \to E_B \to z_{BB}, \; z_{BA} \to E_B\}$ is a $z_{BB}$-tree with resistance $3c + 1.9b$. Every tree rooted at $E_A$, $E_B$, or $\mathcal{D}$ contains edges out of all four $z_{a_1 a_2}$, so its resistance is at least $4c > 3c + 1.9b$. Every tree rooted at $y \in \{z_{AA}, z_{AB}, z_{BA}\}$ has resistance at least $3c + 1.9b$, with equality only if it contains $\mathcal{D} \to E_B$, $E_B \to z_{BB}$, and either $z_{BB} \to E_B$ or $z_{BB} \to \mathcal{D}$, which would create a cycle. By Lemma \ref{lem:young}, the unique stochastically stable action profile is $(B, B, B)$. Since $W_{-3}(B, \cdot)$ takes the values $0.1$, $0.9$, $0.1$, and $0.2$ at $(A, A)$, $(A, B)$, $(B, A)$, and $(B, B)$, respectively, $(B, B, B)$ is not residual-welfare-maximizing. \hfill $\blacksquare$

\subsection{Proof of Proposition \ref{prop:heterogeneous-params}}\label{app:proof-heterogeneous-params}

The proof follows that of \cite[Theorem 3.2]{marden2014achieving}, so we only describe the modifications. For each $i \in \mathcal{N}$, let $(b_i, c_i)$ denote the parameters of player $i$, so that $(b_i, c_i) = (b, c)$ for all $i \neq j$. Let $\mathcal{C}^0 \subseteq \mathcal{Z}$ denote the set of states in which every player is content and receives their baseline utility at the baseline action profile, i.e., $m_i = C$ and $\bar{u}_i = U_i(\bar{a})$ for all $i \in \mathcal{N}$. Since the unperturbed process does not depend on the parameters $(b_i, c_i)$, \cite[Lemma 4.1]{marden2014achieving} still applies: the recurrence classes are the all-discontent set $\mathcal{D}$ and the singletons $\{z\}$ for each $z \in \mathcal{C}^0$.

The only change is in the resistances between these classes computed in \cite{marden2014achieving}. Since an experiment by player $i$ now has resistance $c_i$, the resistance of $z \to \mathcal{D}$ becomes $\underline{c} = \min\{c, c_j\}$ for each $z \in \mathcal{C}^0$, and every edge out of $z$ has resistance at least $\underline{c}$. Since player $i$ now accepts $\bar{u}_i$ with resistance $b_i (1 - \bar{u}_i)$, the resistance of $\mathcal{D} \to z$ becomes
\begin{equation}
    \sum_{i \in \mathcal{N}} b_i (1 - \bar{u}_i) = \sum_{i \in \mathcal{N}} b_i - \sum_{i \in \mathcal{N}} b_i U_i(\bar{a}).
    \label{eq:app-D-to-z}
\end{equation}
With $c$ replaced by $\underline{c}$ and $\sum_{i}(1 - \bar{u}_i)$ replaced by \eqref{eq:app-D-to-z}, the remainder of the proof carries over. In particular, since $c_j \geq N$, we have $\underline{c} \geq N \geq \sum_i b_i$, so no state in $\mathcal{D}$ is stochastically stable.\footnote{As in \cite{marden2014achieving}, this follows by exchanging an edge $z \to w$ in a $\mathcal{D}$-tree for the edge $\mathcal{D} \to z$, where $z \in \mathcal{C}^0$ is chosen with $\bar{u}_k > 0$ for some $k$, so that the resistance of $\mathcal{D} \to z$ is strictly less than $\sum_i b_i$; such a $z$ exists unless every utility is identically zero, in which case the result is trivial.} Moreover, \cite[Lemma 4.3]{marden2014achieving} gives the stochastic potential $\underline{c} (\lvert \mathcal{C}^0 \rvert - 1) + \sum_i b_i (1 - \bar{u}_i)$ for each $z \in \mathcal{C}^0$. Since the first term is the same for all $z \in \mathcal{C}^0$, Lemma \ref{lem:young} and \eqref{eq:app-D-to-z} imply that $z$ is stochastically stable if and only if $\bar{a}$ maximizes $\sum_{i \in \mathcal{N} \setminus \{j\}} U_i + \frac{b_j}{b} U_j$; in particular, $c_j$ does not affect the stochastically stable states. \hfill $\blacksquare$

\subsection{Proof of Proposition \ref{prop:strategic}}\label{app:proof-strategic}

As with Proposition \ref{prop:heterogeneous-params}, the proof follows that of \cite[Theorem 3.2]{marden2014achieving}, so we only describe the modifications. Under the unperturbed process, player $j$ behaves exactly like a player following the WM learning algorithm: when content, they select their baseline action and become discontent if their utility changes, and when discontent, they remain discontent. Hence, \cite[Lemma 4.1]{marden2014achieving} still applies, and the recurrence classes are again $\mathcal{D}$ and the singletons $\{z\}$ for each $z \in \mathcal{C}^0$, with $\mathcal{C}^0$ as defined in Appendix~\ref{app:proof-heterogeneous-params}. Since each $z \in \mathcal{C}^0$ is uniquely determined by its baseline action profile $\bar{a}$, we augment this notation for the remainder of the proof by writing $z_a$ for the unique state in $\mathcal{C}^0$ with $\bar{a} = a$, so that $\mathcal{C}^0 = \{z_a : a \in \mathcal{A}\}$.

The resistances between these classes change as follows. Since player $j$ never experiments, leaving $z_a$ requires either an experiment by some player $i \neq j$, which has resistance $c$, or player $j$ becoming discontent while receiving $U_j(a)$, which has resistance $\Phi(U_j(a))$. Since $\Phi < N \leq c$, the resistance of $z_a \to \mathcal{D}$ is $\Phi(U_j(a))$, and every edge out of $z_a$ has at least this resistance. Moreover, since the last transition of player $j$ into $(C, a_j, U_j(a))$ has resistance $\Psi(U_j(a))$, \ref{item:R2} implies that the resistance of $\mathcal{D} \to z_a$ is
\begin{equation*}
    \Psi(U_j(a)) + \sum_{i \in \mathcal{N} \setminus \{j\}} b (1 - U_i(a)).
\end{equation*}
With the resistance $c$ of each edge $z_{a'} \to \mathcal{D}$ replaced by $\Phi(U_j(a'))$, the proof of \cite[Lemma 4.3]{marden2014achieving} carries over unchanged, since every edge out of $z_{a'}$ still has resistance at least that of $z_{a'} \to \mathcal{D}$. Letting $K = \sum_{a' \in \mathcal{A}} \Phi(U_j(a'))$ and $g = \Phi - \Psi$, this gives
\begin{equation*}
    \gamma(z_a) = K - g(U_j(a)) + \sum_{i \in \mathcal{N} \setminus \{j\}} b (1 - U_i(a)).
\end{equation*}
Similarly, every $\mathcal{D}$-tree contains an edge out of each $z_{a'}$, and the tree consisting of the edges $z_{a'} \to \mathcal{D}$ attains this bound, so $\gamma(\mathcal{D}) = K$.

Now, $g$ is increasing, since $\Phi$ is increasing and $\Psi$ is decreasing. Let $a^*$ and $a$ be action profiles with $U_j(a^*) = u_j^*$ and $U_j(a) < u_j^*$, so that $U_j(a) \leq u_j^* - \Delta_j$. Since $b \leq 1$ and $U_i \geq 0$ for all $i$, \eqref{eq:GH-cond-1} and \eqref{eq:GH-cond-2} give
\begin{align*}
    \gamma(z_{a^*}) - K &\leq (N - 1) - g(u_j^*) < 0, \\
    \gamma(z_a) - K &\geq - g(u_j^* - \Delta_j) > 0.
\end{align*}
Thus, $\gamma(z_{a^*}) < \gamma(\mathcal{D}) < \gamma(z_a)$, so the recurrence classes with minimum stochastic potential are exactly the states $z_{a^*}$ with $U_j(a^*) = u_j^*$. By Lemma \ref{lem:young}, every stochastically stable state is of this form. \hfill $\blacksquare$

\bibliographystyle{IEEEtran}
\bibliography{ref}

@article{marden2014achieving,
  title={Achieving {Pareto} optimality through distributed learning},
  author={Marden, Jason R and Young, H Peyton and Pao, Lucy Y},
  journal={SIAM Journal on Control and Optimization},
  volume={52},
  number={5},
  pages={2753--2770},
  year={2014},
  publisher={SIAM}
}

@article{marden2009payoff,
  title={Payoff-based dynamics for multiplayer weakly acyclic games},
  author={Marden, Jason R and Young, H Peyton and Arslan, G{\"u}rdal and Shamma, Jeff S},
  journal={SIAM Journal on Control and Optimization},
  volume={48},
  number={1},
  pages={373--396},
  year={2009},
  publisher={SIAM}
}

@article{pradelski2012learning,
  title={Learning efficient {Nash} equilibria in distributed systems},
  author={Pradelski, Bary SR and Young, H Peyton},
  journal={Games and Economic Behavior},
  volume={75},
  number={2},
  pages={882--897},
  year={2012},
  publisher={Elsevier}
}

@inproceedings{rajab2026log,
  title={Log-linear learning for coordination with heterogeneous bounded rationalities},
  author={Rajab, Fat-Hy O and Anubi, Olugbenga M and Vasconcelos, Marcos M},
  booktitle={2026 American Control Conference (ACC)},
  pages={1--6},
  year={2026},
  organization={IEEE}
}

@article{akbar2023robustness,
  title={Robustness of learning in games with heterogeneous players},
  author={Akbar, Aqsa Shehzadi and Jaleel, Hassan and Abbas, Waseem and Shamma, Jeff S},
  journal={IEEE Transactions on Automatic Control},
  volume={68},
  number={3},
  pages={1553--1567},
  year={2023},
  doi={10.1109/TAC.2022.3166717},
  publisher={IEEE}
}

@inproceedings{jaleel2019robustness,
  title={Robustness of stochastic learning dynamics to player heterogeneity in games},
  author={Jaleel, Hassan and Abbas, Waseem and Shamma, Jeff S},
  booktitle={2019 IEEE 58th Conference on Decision and Control (CDC)},
  pages={5002--5007},
  year={2019},
  organization={IEEE}
}

@inproceedings{shah2024learning,
  title={Learning optimal stable matches in decentralized markets with unknown preferences},
  author={Shah, Vade and Ferguson, Bryce L and Marden, Jason R},
  booktitle={2024 IEEE 63rd Conference on Decision and Control (CDC)},
  pages={787--792},
  year={2024},
  organization={IEEE}
}

@inproceedings{shah2025trial,
  title={Trial-and-error learning in decentralized matching markets},
  author={Shah, Vade and Ferguson, Bryce L and Marden, Jason R},
  booktitle={2025 IEEE 64th Conference on Decision and Control (CDC)},
  pages={4936--4941},
  year={2025},
  organization={IEEE}
}

@inproceedings{menon2014collaborative,
  title={Collaborative extremum seeking for welfare optimization},
  author={Menon, Anup and Baras, John S},
  booktitle={53rd IEEE Conference on Decision and Control (CDC)},
  pages={346--351},
  year={2014},
  organization={IEEE}
}

@article{young2009learning,
  title={Learning by trial and error},
  author={Young, H Peyton},
  journal={Games and Economic Behavior},
  volume={65},
  number={2},
  pages={626--643},
  year={2009},
  publisher={Elsevier}
}

@article{foster2006regret,
  title={Regret testing: Learning to play {Nash} equilibrium without knowing you have an opponent},
  author={Foster, Dean P and Young, H Peyton},
  journal={Theoretical Economics},
  volume={1},
  number={3},
  pages={341--367},
  year={2006}
}

@book{fudenberg1998theory,
  title={The Theory of Learning in Games},
  author={Fudenberg, Drew and Levine, David K},
  year={1998},
  publisher={MIT Press}
}

@incollection{marden2015game,
  title={Game theory and distributed control},
  author={Marden, Jason R and Shamma, Jeff S},
  booktitle={Handbook of Game Theory with Economic Applications},
  volume={4},
  pages={861--899},
  year={2015},
  publisher={Elsevier}
}

@article{hart2003uncoupled,
  title={Uncoupled dynamics do not lead to {Nash} equilibrium},
  author={Hart, Sergiu and Mas-Colell, Andreu},
  journal={American Economic Review},
  volume={93},
  number={5},
  pages={1830--1836},
  year={2003},
  publisher={American Economic Association}
}

@article{babichenko2012completely,
  title={Completely uncoupled dynamics and {Nash} equilibria},
  author={Babichenko, Yakov},
  journal={Games and Economic Behavior},
  volume={76},
  number={1},
  pages={1--14},
  year={2012},
  publisher={Elsevier}
}

@article{young1993evolution,
  title={The evolution of conventions},
  author={Young, H Peyton},
  journal={Econometrica},
  volume={61},
  number={1},
  pages={57--84},
  year={1993}
}

@inproceedings{deng2019strategizing,
  title={Strategizing against no-regret learners},
  author={Deng, Yuan and Schneider, Jon and Sivan, Balasubramanian},
  booktitle={Advances in Neural Information Processing Systems},
  volume={32},
  year={2019}
}

@article{marden2013model,
  title={A model-free approach to wind farm control using game theoretic methods},
  author={Marden, Jason R and Ruben, Shalom D and Pao, Lucy Y},
  journal={IEEE Transactions on Control Systems Technology},
  volume={21},
  number={4},
  pages={1207--1214},
  year={2013},
  publisher={IEEE}
}

@article{roughgarden2002bad,
  title={How bad is selfish routing?},
  author={Roughgarden, Tim and Tardos, {\'E}va},
  journal={Journal of the ACM},
  volume={49},
  number={2},
  pages={236--259},
  year={2002}
}

@article{zhu2013distributed,
  title={Distributed coverage games for energy-aware mobile sensor networks},
  author={Zhu, Minghui and Mart{\'\i}nez, Sonia},
  journal={SIAM Journal on Control and Optimization},
  volume={51},
  number={1},
  pages={1--27},
  year={2013},
  publisher={SIAM}
}

@article{porteagel2020wind,
  title={Wind-turbine and wind-farm flows: A review},
  author={Port{\'e}-Agel, Fernando and Bastankhah, Majid and Shamsoddin, Sina},
  journal={Boundary-Layer Meteorology},
  volume={174},
  number={1},
  pages={1--59},
  year={2020},
  publisher={Springer}
}

@inproceedings{boersma2017tutorial,
  title={A tutorial on control-oriented modeling and control of wind farms},
  author={Boersma, Sjoerd and Doekemeijer, Bart and Gebraad, Pieter and Fleming, Paul A and Annoni, Jennifer and Scholbrock, Andrew K and Frederik, Joeri and van Wingerden, Jan-Willem},
  booktitle={2017 American Control Conference (ACC)},
  pages={1--18},
  year={2017},
  organization={IEEE}
}

\end{document}